\titlespacing{\paragraph}{%
  0pt}{
  0.1\baselineskip}{
  1em}
\titlespacing\section{0pt}{8pt plus 1pt minus 1pt}{2pt plus 1pt minus 1pt}
\titlespacing\subsection{0pt}{8pt plus 1pt minus 1pt}{2pt plus 1pt minus 1pt}
\titlespacing\subsubsection{0pt}{8pt plus 1pt minus 1pt}{2pt plus 1pt minus 1pt}
\newtheoremstyle{slplain}
  {.4\baselineskip\@plus.1\baselineskip\@minus.1\baselineskip}
  {.3\baselineskip\@plus.1\baselineskip\@minus.1\baselineskip}
  {\itshape}
  {}
  {\bfseries}
  {.}
  { }
  {}
\theoremstyle{slplain} 
\newtheorem*{definition*}{Definition}
\newtheorem*{theorem*}{Theorem}
\newtheorem{theorem}{Theorem}[section]
\newtheorem{lemma}[theorem]{Lemma}
\newtheorem{claim}[theorem]{Claim}
\newtheorem{corollary}[theorem]{Corollary}
\newtheorem{definition}[theorem]{Definition}
\newtheorem*{rep@theorem}{\rep@title}
\newcommand{\newreptheorem}[2]{%
\newenvironment{rep#1}[1]{%
 \def\rep@title{#2 \ref{##1}}%
 \begin{rep@theorem}}%
 {\end{rep@theorem}}}
\theoremstyle{definition}
\theoremstyle{remark}
\numberwithin{equation}{section}
\newtheoremstyle{etplain}
  {.0\baselineskip\@plus.1\baselineskip\@minus.1\baselineskip}
  {.0\baselineskip\@plus.1\baselineskip\@minus.1\baselineskip}
  {\itshape}
  {}
  {\bfseries}
  {.}
  { }
  {}
\newcommand{\litlow}[1]{\mathord{\mathcode`\-="702D\sf #1\mathcode`\-="2200}}
\newcommand{\lit}[1]{\ensuremath{\litlow{#1}}}
\newcommand{\namedref}[2]{\hyperref[#2]{#1~\ref*{#2}}}
\newcommand{\sectionref}[1]{\namedref{Section}{#1}}
\newcommand{\appendixref}[1]{\namedref{Appendix}{#1}}
\newcommand{\theoremref}[1]{\namedref{Theorem}{#1}}
\newcommand{\figureref}[1]{\namedref{Figure}{#1}}
\newcommand{\figurerefb}[2]{\hyperref[#1]{Figure~\ref*{#1}#2}}
\newcommand{\lemmaref}[1]{\namedref{Lemma}{#1}}
\newcommand{\claimref}[1]{\namedref{Claim}{#1}}
\newcommand{\equationref}[1]{\hyperref[#1]{(\ref*{#1})}}
\renewcommand{\eqref}{\equationref}
\newcommand{\DEBUG}[1]{}
\renewcommand{\setminus}{-}
\renewcommand{\emptyset}{\varnothing}
\begin{document}

\date{}

\title{On the Optimal Space Complexity of Consensus \\ for Anonymous Processes}

\author{
 Rati Gelashvili\\
  \small MIT\\
  \small gelash@mit.edu\\
 }

\maketitle
\begin{abstract}
The optimal space complexity of consensus in shared memory is a decades-old open problem.
For a system of $n$ processes, no algorithm is known that uses a sublinear number of registers.
However, the best known lower bound due to Fich, Herlihy, and Shavit requires $\Omega(\sqrt{n})$ registers.

The special symmetric case of the problem where processes are anonymous 
  (run the same algorithm) has also attracted attention.
Even in this case, the best lower and upper bounds are still $\Omega(\sqrt{n})$ and $O(n)$.
Moreover,  Fich, Herlihy, and Shavit first proved their lower bound for anonymous processes, 
  and then extended it to the general case.
As such, resolving the anonymous case might be a significant step towards understanding and solving the general problem.

In this work, we show that in a system of anonymous processes, any consensus algorithm satisfying 
  nondeterministic solo termination has to use $\Omega(n)$ read-write registers in some execution.
This implies an $\Omega(n)$ lower bound on the space complexity of deterministic obstruction-free and 
  randomized wait-free consensus, matching the upper bound and closing the symmetric case of the open problem.
\end{abstract}

	
\thispagestyle{empty}

\newpage
\setcounter{page}{1}

\section{Introduction}
The celebrated Fischer, Lynch and Paterson (FLP) ~\cite{FLP85} result proved that fundamental synchronization tasks 
  including consensus and test-and-set are not solvable in a wait-free manner using read-write registers.
However, the work of Ben-Or~\cite{BenOr83} shows that 
  it is possible to circumvent FLP and obtain efficient distributed algorithms,
  if we relax the problem specification to allow probabilistic termination.
It is also possible to solve these tasks deterministically, but obstruction-free instead of wait-free; 
  it is known how to convert any deterministic obstruction-free algorithm 
  into a randomized wait-free algorithm against an oblivious adversary (see~\cite{GHHW13}).
	
The space complexity of an algorithm is the maximum number of registers used in any execution.
A lot of research has been dedicated to improving the upper and lower bounds on the space complexity for canonical tasks.
For test-and-set, an $\Omega(\log{n})$ lower bound was shown in~\cite{SP89} and independently in~\cite{GW12}.
On the other hand, an $O(\sqrt{n})$ deterministic obstruction-free upper bound was given in~\cite{GHHW13}.
The final breakthrough was the recent obstruction-free algorithm designed by Giakkoupis et al.~\cite{GHHW14},
  with $O(\log{n})$ space complexity, essentially closing the problem\footnote{The space complexity of randomized test-and-set against a strong (adaptive) adversary remains open.}.

For consensus, an upper bound with $n$ registers was long known from~\cite{AH90}.
A lower bound of $\Omega(\sqrt{n})$ by Fich et al.~\cite{FHS98} first appeared in 1993. 
The proof is notorious for its technicality and utilizes a neat inductive combination of covering and valency arguments.  
Another version of the proof appeared in a textbook~\cite{AE14}.
However, a linear lower bound or a sublinear space algorithm has remained elusive to date.

The authors of~\cite{FHS98} conjectured a tight lower bound of $\Omega(n)$.
But the linear lower bound has not been proven even in a restricted, symmetric case, where all processes are anonymous.
In such a system processes can be thought of as running the same code: 
  all processes with the same input start in the same initial state and behave identically.
The same linear upper bound holds for anonymous processes, since a deterministic obstruction free consensus algorithm 
  that uses $O(n)$ registers is known~\cite{GR05}.
Interestingly, the proof in~\cite{FHS98} starts by showing the $\Omega(\sqrt{n})$ lower bound for anonymous processes,
  which is then extended to a much more complex argument for the general case.
Therefore, a linear lower bound in the anonymous setting might prove to be a meaningful step 
  in better understanding and solving the general case of the open problem.

\paragraph{Contribution.}
In this paper we prove the $\Omega(n)$ lower bound in the symmetric (anonymous) case
  for consensus algorithms satisfying the standard \emph{nondeterministic solo termination} property.
Any lower bound for algorithms satisfying the nondeterministic solo termination 
  implies a lower bound for deterministic obstruction-free and randomized wait-free algorithms.
As in~\cite{FHS98, AE14}, the bound is for the worst-case space complexity of the algorithm,
  i.e. for the number of registers used in some execution, regardless of its actual probability.

Our argument relies on a specific class of executions which we call \emph{reserving}, 
  and on the ability to define valency, corresponding to possible return values, for these executions. 
This definition of valency and the ability to cover registers with modified contents by reserved processes 
  greatly simplifies the task of performing an inductive argument.
We hope that these techniques will be useful for future work.

We also demonstrate how the lower bound can be extended to a non-anonymous, adaptive, setting
  where processes come from a very large namespace and the bound depends 
  on the size of the subset of processes that actually participate in the execution.
However, this extension requires additional restrictions on register size and termination,
  and is provided mainly to illustrate an approach.
	
\paragraph{Definitions and Notation.}
We use the standard shared-memory model and similar notation to~\cite{FHS98, AE14}.
We consider anonymous processes and atomic read-write registers.
A process is \emph{covering} a register $R$, if the next step of $p$ can be a write to $R$.
A \emph{block write} of a set of processes $P$ to a set of covered registers $V$ 
  is a sequence of write steps by processes in $P$, 
  where each step is a write to a different register and all registers get written to.
	
In a system of anonymous processes, if a process $p$ in state $s$ performs a particular operation, 
  for any configuration with any process $q$ in the same state $s$, $q$ can also perform the exactly same operation.
Finally, if $p$ and $q$ perform the same operation from the same state with the same outcome (i.e. read the same value), 
  then both $p$ and $q$ end up in the same state after the operation.  
In the context of randomized algorithms, 
  anonymous processes always perform the same operation from the same state 
  (including flipping coins with the same random distribution),
  and end up in identical state if they observe the same results.

A \emph{clone} of a process $p$, exactly as in~\cite{FHS98, AE14}, 
  is defined as another process with the same input as $p$, 
  that shadows $p$ by performing the same operations as $p$ in lockstep,
  reading and writing the same values immediately after $p$, and remaining in the same state,
  all the way until some write of $p$.
Because the system consists of anonymous processes, in any execution with sufficiently many processes,
  for any write operation of $p$,
  there always exists an alternative execution with a clone $q$ that shadowed $p$ all the way until the write.
In particular, in the alternative execution,
  process $q$ \emph{covers} the register and is about to write the value that $p$ last wrote there. 
Moreover, the two executions with or without the clone covering the register 
  are completely indistinguishable to all processes other than the clone itself.

An execution is a sequence of steps by processes and 
  a \emph{solo} execution is an execution where all steps are taken by a single process.
An execution interval is a subsequence of consecutive steps from some execution.  
In the binary consensus problem each participating process starts with a binary input $0$ or $1$, 
  and must return a binary output.
The correctness criterium is that all outputs must be the same and equal to the input of some process.
We say that an execution interval \emph{decides} $0$ (or $1$) 
  if some process returns $0$ (or $1$, respectively) during this execution interval.

A wait-free termination requirement means that each participating process must eventually return an output
  within a finite number of own steps, regardless of how the other processes are scheduled.
The FLP result shows that in the asynchronous shared memory model with read-write registers,
  no deterministic algorithm can solve binary consensus in a wait-free way.
However, it is possible to deterministically solve obstruction-free consensus, i.e. 
  when processes are only required to return an output if they run solo from some configuration.
It is also possible to solve consensus in a randomized wait-free way,
  when processes are allowed to flip random coins and decide their next steps accordingly.
A \emph{nondeterministic solo termination} property of an algorithm means that from each reachable configuration,
  for each process, there exists a finite solo execution by the process where it terminates and returns an output.
We prove our lower bounds for binary consensus algorithms that satisfy this \emph{nondeterministic solo termination} property,
  because both deterministic obstruction-free algorithms and randomized wait-free algorithms fall into this category.

\section{Space Complexity Lower Bound}
In order to demonstrate our approach, we start by presenting a different proof of 
  the $\Omega(\sqrt{n})$ space lower bound in the anonymous setting.
It uses induction on the number of registers written during an execution,
  as opposed to induction on the tuple of sizes of pending block writes in~\cite{FHS98}.
The proof also has an additional benefit that the use of covering and valency arguments is decoupled.
As usual, we use covering to enforce writing to a new register,
  while a valency argument reminiscent of~\cite{FLP85} ensures that both decision values remain reachable by solo executions.    

Next, building upon this new argument, we prove an $\Omega(n)$ space lower bound for consensus 
  with nondeterministic solo termination in a system of anonymous processes.
There are some significant differences, for instance, 
  the execution is constructed in such a way that after a register is written to, it always remains covered.
Moreover, valency is redefined to account for this specific class of executions.
The rest is induction.
\subsection{A Square-Root Lower Bound}
\label{sec:sqrt}
In this section, we define \emph{valency} as follows.
If there is a solo execution of some process returning $0$ from a configuration, 
  then we call this configuration \emph{0-valent} 
  (and \emph{1-valent} if there is a solo execution of a process that returns $1$).
Solo termination implies that every configuration is $0$-valent or $1$-valent.
Note that unlike the standard definition of valency,
  our definition allows the same configuration to be simultaneously $0$-valent and $1$-valent.
We call such configurations that are both $0$-valent and $1$-valent \emph{bivalent}, and \emph{univalent} otherwise.
Notice that a configuration is bivalent if two solo executions of the same process return different values.
If a configuration is $0$-valent, but not $1$-valent 
  (i.e. no solo execution from this configuration decides $1$), 
  then we call it $0$-univalent, meaning that the configuration is univalent with valency $0$.
Analogously, a configuration is $1$-univalent if it is $1$-valent but not $0$-valent.
 
Observe that if we have at least two processes, then in every bivalent configuration
  we can always find two distinct processes $p$ and $q$, 
  such that there is a solo execution of $p$ returning $0$ and a solo execution of $q$ returning $1$.
This is because either the configuration is bivalent because of solo executions of distinct processes, 
  in which case we are done, 
  or two solo executions of some process return different values, 
  in which case it suffices to consider any terminating solo execution of another process.

For the system of anonymous processes, and a consensus algorithm that uses atomic read-write registers and 
  satisfies the nondeterministic solo termination property, we prove the following statement by induction:
\begin{lemma}
\label{lem:sqrt}
For $r \geq 0$, there exists a system of $\frac{(r-1)r}{2}+2$ anonymous processes, 
  such that for any consensus algorithm, a configuration $C_r$ is reachable by an execution $E_r$ with the following properties:
\begin{itemize}[noitemsep,nolistsep]
\item There is a set $R$ of $r$ registers, each of which has been written to during $E_r$, and 
\item the configuration $C_r$ is bivalent.
\end{itemize}
\end{lemma}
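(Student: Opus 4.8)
The plan is to prove \lemmaref{lem:sqrt} by induction on $r$, using covering arguments to force new registers to be written and a valency argument à la FLP to maintain bivalence. The base case $r=0$ (or $r=1$) should be easy: with two processes, one starting with input $0$ and one with input $1$, the initial configuration is bivalent because each running solo returns its own input, and no registers need to have been written ($R=\emptyset$). So the real content is the inductive step, $C_{r-1} \rightsquigarrow C_r$.

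For the inductive step, suppose we have the bivalent configuration $C_{r-1}$ reached by $E_{r-1}$, with $r-1$ registers in $R$ written to, using roughly $\binom{r-1}{2}+2$ processes. The idea is to add a fresh batch of about $r-1$ new clone processes to obtain the larger system, and then extend the execution. Since $C_{r-1}$ is bivalent, by the observation preceding the lemma there are two \emph{distinct} processes $p_0$ and $p_1$ with solo executions from $C_{r-1}$ deciding $0$ and $1$ respectively. I would run, say, $p_0$ solo from $C_{r-1}$ until it is about to decide $0$. Along this solo run, since the other valency $1$ is still reachable (from $C_{r-1}$ via $p_1$, whose state is untouched by $p_0$'s solo run only if $p_0 \ne p_1$ — here's where distinctness matters), at some point the solo execution of $p_0$ must write to a register \emph{outside} $R$: otherwise the whole solo run of $p_0$ only touches registers in $R$, and I could try to argue that this run could have been "erased" or made invisible, contradicting that $1$ is still decidable. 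More carefully: I want to find the first step in $p_0$'s solo run that writes to a register not in $R$; call it $R_{\mathrm{new}}$, and let $C_r$ be the configuration just \emph{after} that write, with the new register set $R \cup \{R_{\mathrm{new}}\}$ of size $r$. Then I must show $C_r$ is still bivalent.

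To show $C_r$ is bivalent: it is $0$-valent because $p_0$ continues its solo run from $C_r$ and decides $0$. For $1$-valency, I need a solo execution deciding $1$. Here is where the clones come in: the prefix of $p_0$'s solo run leading to $C_r$ only wrote to registers in $R$ except for the last step. I would have arranged in advance (using clones, as described in the preamble) that for each such write there is a clone covering that register holding the same value that would be written — so the contents of all registers in $R$ as seen in $C_r$ can be "restored" to a state indistinguishable from a configuration where $p_0$'s partial run never happened, by letting the relevant clones write. Actually the cleaner route, matching the $\Omega(\sqrt n)$ style: before running $p_0$, deploy a set $Q$ of clones, one shadowing each process, and have $|Q|$ large enough. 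After reaching $C_r$, do a block write by clones to all registers in $R$ restoring them to their $C_{r-1}$-contents; from the resulting configuration, which is indistinguishable to $p_1$ from $C_{r-1}$, run $p_1$ solo to decide $1$. Wait — but that block write happens after $C_r$, so it shows a descendant of $C_r$ is $1$-valent, which by definition makes $C_r$ itself $1$-valent. Hence $C_r$ is bivalent. The process count grows by the number of new clones needed, roughly $r-1$ (one per register in $R$ plus a couple), giving $\binom{r-1}{2}+2 + (r-1) = \binom{r}{2}+2$, matching the statement.

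The main obstacle, I expect, is the valency bookkeeping in the step "$p_0$'s solo run must write outside $R$": I need to rule out the possibility that $p_0$ terminates deciding $0$ while only ever writing inside $R$. The argument is that if so, then after $p_0$ decides, I restore $R$ via clones and run $p_1$ to decide $1$ — but this would mean a single configuration (or its descendants) both decide $0$ and then $1$, which is fine for \emph{my} nonstandard valency but violates \emph{agreement}, since $p_0$ already returned $0$ and then $p_1$ returns $1$ in the same execution. That contradiction is what forces the new write. A secondary subtlety is ensuring enough clones are available and that the "restoring" block write is well-defined — each clone must be poised to write exactly the value currently needed, which requires carefully choosing which configuration's contents to restore and scheduling clone deployment before $p_0$ runs. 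Getting the indistinguishability invariants exactly right (so that $p_1$ truly cannot tell the difference from $C_{r-1}$) is the fiddly part, but it is standard covering-argument technology.
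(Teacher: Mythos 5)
Your base case, your process accounting, and your covering argument (clones cover the $r$ registers of $R$ with their $C_r$-values, so if $p_0$'s solo run wrote only inside $R$ you could block-overwrite, run $p_1$ to decide $1$, and contradict agreement because $p_0$ already returned $0$) all match the paper. The genuine gap is in the step you flag as merely ``fiddly'': establishing that the configuration $C_{r+1}$ reached just after $p_0$'s first write $w_p$ to a register $R_{\mathrm{new}} \notin R$ is bivalent. Your witness for $1$-valency is to block-write the registers of $R$ back to their $C_r$-contents and then run $p_1$, claiming the result is indistinguishable from $C_r$ to $p_1$. But $R_{\mathrm{new}}$ has now been written, no clone covers it (clones were only deployed on $R$), and its pre-write value cannot be restored; $p_1$ can simply read $R_{\mathrm{new}}$ and distinguish. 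A fresh write to a new register is exactly the event that \emph{cannot} be erased by covering, so this route to $1$-valency fails. There is a second, independent problem: the lemma's notion of valency is defined via \emph{solo} executions, and your proposed witness (a block write by several clones followed by $p_1$'s run) is not a solo execution from $C_{r+1}$; under the paper's definition a multi-process descendant deciding $1$ does not make $C_{r+1}$ $1$-valent, and the solo-based definition is what the next inductive step needs in order to extract two distinct processes with opposite-deciding solo runs.

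The paper closes this gap with machinery your proposal omits entirely: it uses the first external writes of \emph{both} solo runs, writing $\alpha = \alpha' w_p \alpha''$ and $\beta = \beta' w_q \beta''$, and then examines the family of configurations reached by $E_r \alpha' B_i w_p$, where $B_i$ ranges over prefixes of $\gamma \beta' w_q$ ($\gamma$ being the clones' block write). The endpoint $i=0$ is $0$-valent via $\alpha''$ and the endpoint $i=\ell$ is $1$-valent via $\beta''$; if no intermediate configuration is bivalent, there is a switching point $i, i+1$, and the extra operation $o$ must be a write to the same register as $w_p$ (else $p$ cannot tell $X$ from $Y$), whence $w_p$ and $o$ commute and the process performing $o$ reaches a configuration indistinguishable from $Y$ when run from $X$, contradicting the opposite univalencies. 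Without this FLP-style switching argument (or an equivalent), the inductive step does not go through, so the proposal as written is incomplete at its central point.
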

\begin{proof}
The proof is by induction, with the base case $r = 0$.
Our system consists of two processes $p$ and $q$, 
  $p$ starts with input $0$, $q$ starts with input $1$, and $C_0$ is the initial state.
Clearly, no registers have been written to in $C_0$ and bivalency follows by nondeterministic solo termination.

Now, let us assume the induction hypothesis for some $r$ and prove it for $r+1$.
By the induction hypothesis, we can reach a configuration $C_{r}$ using $\frac{(r-1)r}{2}+2$ processes. 
The goal is to use another $r$ processes and extend $C_{r}$ to $C_{r+1}$, 
  completing the proof since $r+\frac{(r-1)r}{2}+2=\frac{r(r+1)}{2} + 2$.

As discussed above, because we have at least $2$ processes and $C_{r}$ is bivalent, 
  there exists a process $p$ and its solo execution $\alpha$ from $C_{r}$ after which $p$ returns $0$ and 
  a process $q \neq p$ and its solo execution $\beta$ from $C_{r}$ after which $q$ returns $1$.\footnote{Alternatively one can say execution $E_{r} \alpha$ ends with $p$ returning $0$ and $E_{r} \beta$ ends with $q$ returning $1$.}  
Recall that $R$ is the set of $r$ registers that were written to in execution $E_{r}$.
For each register in $R$, let a new process clone the process that last wrote to it 
  all the way to covering the register poised to write the same value as present in the register in configuration $C_{r}$.

Let us now apply the covering argument utilizing the clones.
Consider execution $E_{r} \alpha \gamma \beta$, where $\gamma$ is a block write to $R$ by the new clones.
We know that process $p$ returns $0$ after $E_{r} \alpha$.
During its solo execution $\alpha$, process $p$ has to write to a register outside of $R$.
Otherwise, the configuration after $E_r \alpha \gamma$ is indistinguishable from $C_{r}$ to process $q$
  as the values in all registers are the same, and $q$ is still in the same state as in $C_{r}$.
Hence, $q$ will return $1$ after $E_{r} \alpha \gamma \beta$ as it would after $E_r \beta$, 
  contradicting the correctness of the consensus algorithm.
Analogously, process $q$ has to write outside of $R$ during $\beta$.
Let $\alpha = \alpha' w_p \alpha''$, where $w_p$ is the first write of $p$ outside the set of registers $R$,
  and let $\beta = \beta' w_q \beta''$, with $w_q$ being the first write outside of $R$.
Let $\ell$ be the length of $\gamma \beta' w_q$ and 
  $B_i$ be a prefix of $\gamma \beta' w_q$ of length $i$, for all possible $0 \leq i \leq \ell$.

Next, we use a valency argument to reach $C_{r+1}$.
We show that either the configuration reached after $\E_{r} \alpha' \gamma \beta' w_q$, 
  or one of the configurations reached after $\E_{r} \alpha' B_i w_p$ for some $i$,  
  satisfies the properties necessary to be $C_{r+1}$.
Clearly, we have used the right number of processes to reach any of these configurations and 
  $r+1$ registers have been written to while doing so, including $R$ and the register written by $w_p$ or $w_q$.
Thus, we only need to show that one of these configurations is \emph{bivalent}.

Assume the contrary.
The configuration for $i=0$ must be $0$-univalent, since $p$ returns $0$ only throughout $\alpha''$, 
  and we assumed that the configuration is not bivalent.
Similarly, the configuration reached after $\E_{r} \alpha' \gamma \beta' w_q = E_r \alpha' B_{\ell}$ is $1$-univalent.
It is univalent by our assumption and $1$-valent as $q$ running solo returns $1$ through $\beta''$
  ($\alpha'$ does not involve a write outside of $R$ and $q$ cannot distinguish from $E_r \beta' w_q \beta''$).
Because the configuration reached after $E_r \alpha' B_{\ell}$ is $1$-univalent, 
  any terminating solo execution of process $p$ from that configuration must also return $1$.
In particular, every terminating solo execution that starts by $p$ performing its next step $w_p$ returns $1$.
So the configuration reached after $E_r \alpha' B_{\ell} w_p$ must be $1$-univalent:
  solo executions of $p$ return $1$ (some solo execution terminates due to nondeterministic solo execution),
  and it is univalent by our assumption (it is the same as configuration for $i=\ell$).
Therefore, the configuration reached after $\E_{r} \alpha' B_i w_p$ is $0$-univalent for $i=0$ and $1$-univalent for $i=\ell$.
Hence, we can find a switching point for some $i$ and $i+1$,
  where the configuration $X$ reached by $\E_{r} \alpha' B_i w_p$ is $0$-univalent, 
  while the configuration $Y$ reached by $\E_{r} \alpha' B_{i+1} w_p$ is $1$-univalent.
Let $o$ be the extra operation in $B_{i+1}$.

Operation $o$ is not by $p$ and may not be a read or a write to the same register as $w_p$ writes to since
  $p$ would not distinguish between $X$ and $Y$ and would return the same output from both configurations
  through the same solo execution, contradicting the existence of the different univalencies.
Otherwise, operations $w_p$ and $o$ commute.
Let $\sigma$ be a terminating solo execution from $Y$ by the process that performed operation $o$,
  where it returns $1$ due to the univalency of $Y$.
Also consider this process performing its next operation $o$ from $X$.
Since $w_p$ and $o$ commute, and $o$ is not a read, 
  the process cannot distinguish between the resulting configuration and $Y$ and returns $1$ through $\sigma$ as from $Y$.
However, $o \sigma$ is a solo execution from $X$ that returns $1$, contradicting the $0$-univalency of $X$.
The contradiction proves the induction step, completing our induction.
\end{proof}
Notice that for $n$ processes,~\lemmaref{lem:sqrt} directly implies the existence of an execution 
  where $\Omega(\sqrt{n})$ registers are written to, proving the desired lower bound.
\subsection{Linear Lower Bound}
\label{sec:linbound}
Consider systems with $n$ anonymous processes and an arbitrary correct consensus algorithm 
  satisfying the nondeterministic solo termination property. 
We will assume that no execution of the algorithm uses more than $n/20$ registers (otherwise, we are trivially done),
  and prove that such an algorithm has to use $\Omega(n)$ registers, which completes the proof.
For notational convenience, let us define $m$ to be $n/20$.

The argument in~\lemmaref{lem:sqrt} relies on a new set of clones in each iteration 
  to overwrite the changes to the contents of the registers made during the inductive step. 
This is the primary reason why we only get an $\Omega(\sqrt{n})$ lower bound.
As the authors of~\cite{FHS98} also mention,
  to get a stronger lower bound we would instead have to reuse existing processes.
In order to do so, these existing processes need to cover the registers in our inductive configurations
  (we must also ensure proper valency conditions on what they are about to write, but let us focus on the covering).
Now, even if we reach such a configuration, 
  during a solo execution interval of some process in the subsequent induction step, all the registers may get written to,
  and we would have to use all the covering existing processes to overwrite the changes.
Therefore, in the next configuration, there is no way to guarantee that the existing processes 
  would still cover various registers.

This is the primary reason why we have to replace solo executions in the proof with a different class of executions
  that we call \emph{reserving}.
Intuitively, reserving executions ensure that for the registers that are written to, 
  some processes are reserved to cover them.
This way, we can have reserved processes cover the registers in subsequent inductive configurations.
Notice that the definition of valency used in the proof of~\lemmaref{lem:sqrt} was based on solo executions.
Thus, we also redefine valency based on reserving executions. 
\subsubsection{Reserving executions}
The following is a formal definition of a reserving execution interval.
\begin{definition}
Let $C$ be some configuration reachable by the algorithm, 
  and let $P$ be a set of at least $m+1$ processes.
We call an execution interval $\gamma$ that starts from configuration $C$ \emph{reserving} from $C$ by $P$ if:
\begin{itemize}[noitemsep, nolistsep]
\item Every step in $\gamma$ is by a process in $P$.  
\item At any time during the execution of $\gamma$:  
  if we let $R_w$ be the set of registers written to so far during $\gamma$, 
then, for each register in $R_w$, 
  there is a \emph{reserved} process $p \in P$ covering that register, one per register. 
\item If a process $p \in P$ returns during $\gamma$ then it does so in the last step of $\gamma$. 
\end{itemize}
\end{definition}
Notice that by definition any prefix of a reserving execution interval is also a reserving execution interval. 
Let $\lit{Res}(C, P)$ be the set of all reserving execution intervals from $C$ by processes in $P$
  that end with a process $p \in P$ returning.
We first show that given sufficiently many processes, such an execution interval exists.
This is essential for defining the valency later.
Recall that we assumed a strict upper bound of $m$ on the number of registers that can ever be written.
\begin{claim}
For any reachable configuration $C$ and a set of at least $m+1$ processes $P$, none of which have returned yet, 
  we have that $\lit{Res}(C, P) \neq \emptyset$.
\label{clm:reserved}
\end{claim}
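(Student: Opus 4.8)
The plan is to construct a reserving execution interval that ends with a return by building it greedily, using nondeterministic solo termination to make progress while maintaining the reservation invariant with a supply of cloned processes. The key idea is that we have at least $m+1$ processes but at most $m$ registers can ever be written, so there is always at least one ``spare'' process available to continue running even after every currently-written register has a reserved process covering it.

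First I would fix some process $p_0 \in P$ that has not returned. By nondeterministic solo termination, there is a finite solo execution of $p_0$ from $C$ that terminates with a return. I would run $p_0$ along this solo execution, but interrupt it just before each write step: whenever $p_0$ is about to perform its first write to some register $R$ that has not yet been written during $\gamma$, I first take a fresh clone $q_R$ of $p_0$ (one that has shadowed $p_0$ step-for-step up to this point, which exists because the system is anonymous and we have enough processes) and let $q_R$ be the reserved process that covers $R$ — note $q_R$ is poised to write exactly the value $p_0$ is about to write. Then I let $p_0$ perform its write and continue. Repeating this, at every point during the constructed interval, every register in $R_w$ has a dedicated reserved process covering it, so the second bullet of the definition holds. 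The first bullet holds because every step is by $p_0$ or one of its clones, all of which we include in $P$; the third bullet holds because only $p_0$ takes non-covering steps and it returns exactly at the end.

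The only gap in the above is counting processes: I need to check that $P$ is large enough to supply all the clones. Here is where the $m+1$ bound matters — since at most $m$ distinct registers are ever written in any execution, in particular at most $m$ registers are written during this solo run of $p_0$, so I need at most $m$ clones plus $p_0$ itself, i.e. $m+1$ processes, which is exactly the hypothesis. (If one wants to be careful about clones of clones or about the clones' own states, one observes that the reserved processes never take a step in $\gamma$ after being installed, so no further clones of them are needed.) A subtle point to verify is that running clones in lockstep with $p_0$ up to each write is itself a legal execution interval and leaves the clones in the right covering state with the right pending value; this is exactly the clone property quoted in the Definitions and Notation section, so it may be cited directly.

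I expect the main obstacle to be purely bookkeeping: making precise that the interleaving of $p_0$'s solo execution with the lockstep advancement of each clone (i) is a single well-defined execution interval, (ii) keeps exactly one reserved process per written register at all intermediate times including the moments mid-block-write, and (iii) respects the process budget. None of this is deep, but it must be stated with enough care that the reservation invariant is seen to hold at \emph{every} prefix, not just at the end — which is what the claim, read together with the remark that prefixes of reserving intervals are reserving, ultimately needs. Once this is set up, $\gamma$ so constructed lies in $\lit{Res}(C,P)$, proving $\lit{Res}(C,P)\neq\emptyset$.
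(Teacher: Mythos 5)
There is a genuine gap: your construction relies on spawning, at the arbitrary reachable configuration $C$, a ``fresh clone $q_R$ of $p_0$ that has shadowed $p_0$ step-for-step up to this point.'' That mechanism is not available here. A clone, as defined in the paper, is a process with the same input that has shadowed $p_0$ from its \emph{initial state}, through the entire history leading to $C$ and beyond; its existence is guaranteed only in an \emph{alternative} execution set up from the start. But the claim quantifies over an arbitrary given $C$ and an arbitrary given set $P$ of at least $m+1$ processes, each sitting in whatever state its own history put it in at $C$ --- none of them need be in $p_0$'s state, and you are not free to replace them by clones of $p_0$ (indeed, the claim is later invoked for specific sets such as $T$, $H$, $F$, $Q'$ of existing process--clone pairs, so redefining $P$ would break those applications). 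Your process budget ``$p_0$ plus $m$ clones $= m+1$'' therefore counts processes that simply do not exist in the configuration the claim is about.

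The fix --- and this is what the paper does --- is to notice that the definition of a reserving execution only requires \emph{some} process of $P$ to be covering each written register; it need not be poised to write any particular value, so no cloning is needed. Each process of $P$ can be advanced along its \emph{own} terminating solo execution (guaranteed by nondeterministic solo termination) and stopped just before its first write, so that after a first stage every process of $P$ covers some register. Then, since at most $m$ registers exist but $|P|\ge m+1$, by pigeonhole two processes always cover the same register, freeing one of them to be advanced further (stopped before its first write \emph{outside} the currently covered set $R$); each such step either finishes with a return or grows $|R|$ by one, so the process terminates within $m$ iterations. Your intuition that ``$m+1$ processes versus $m$ registers leaves a spare process'' is exactly the right counting idea, but it must be realized through this two-stage pigeonhole argument on the processes' own solo executions rather than through mid-execution cloning.
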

\begin{proof}
For a given $C$ and $P$, we will prove the claim by constructing a particular reserving execution interval $\gamma$ 
  that ends when some process $p \in P$ returns.
We start with an empty $\gamma$ and continuously extend it.
In the first stage, one by one, for each process $p \in P$: 
\begin{itemize}[noitemsep, nolistsep]
\item Due to the nondeterministic solo termination, there exists a solo execution of $p$ where $p$ returns.
  \begin{itemize}[noitemsep, nolistsep]
  \item If $p$ ever writes to any register during this solo execution, 
    extend $\gamma$ by the prefix of the execution before this write, and move to the next process in $P$.
  \item Otherwise, complete $\gamma$ by extending it with the whole solo execution of $p$.
  \end{itemize}
\end{itemize}
We have finitely many processes and the first stage described above consists of extending the execution interval 
  at most $|P|$ times.
Each time, because of the nondeterministic solo termination for some process $p \in P$, 
  we extend $\gamma$ by a prefix of a finite solo execution of $p$.
Moreover, all operations are reads by processes in $P$, and therefore the prefix of $\gamma$ constructed so far is reserving.

If some process returns in the first stage, the construction of $\gamma$ is complete.
Otherwise, since the first stage is finite, we move on to the second stage described below.
In the configuration after the first stage each of the at least $m+1$ processes in $P$ is covering a register 
  (by their next write operation after the first stage).
From that configuration, the execution interval $\gamma$ is extended by repeatedly doing the following:
\begin{itemize}[noitemsep, nolistsep]
\item[1.] Let $R$ be the set of covered registers by processes of $P$.
  Since $|R| \leq m < |P|$, we can find two processes $p, q \in P$ covering the same register in $R$.
\item[2.] Due to the nondeterministic solo termination, there exists a solo execution of $p$ where $p$ returns.
  \begin{itemize}[noitemsep, nolistsep]
  \item If $p$ ever writes to a register outside of $R$ during this solo execution, 
    extend $\gamma$ by the prefix of the execution before this write, and continue from the first step.
    Notice that at the beginning of the next iteration, process $p$ still covers a register as required.
  \item Otherwise, complete $\gamma$ by extending it with the whole solo execution of $p$.
  \end{itemize}
\end{itemize}
In the second stage, each iteration terminates, since for any process $p \in P$, 
  we can extend by at most the terminating solo execution of $p$, which exists and is finite.
After each iteration, if the construction is not complete, the size of $R$ increases by one.
But there are at most $m$ registers in the system and $|R| \leq m$. 
Thus, after at most $m$ finite extensions, we will complete the construction of $\gamma$ when some process returns.

The execution is reserving because at all times, the registers that were written-to are in $R$.
Moreover, for each register in $R$, there is always a process covering it
  starting from the time it was first covered by some process $p$ in the second step of some iteration 
  all the way until the end of $\gamma$.
\end{proof}
\noindent The next claim follows immediately from the definition of reserving executions.
\begin{claim}
\label{clm:prefix}
Consider a reachable configuration $C$, a set of at least $m+1$ processes $P'$ none of which have returned yet, 
  and another configuration $C'$ reached after some process $p \not \in P'$ performs a write operation $w_p$ in $C$.  
Moreover, assume that another process $q \neq p$ with $q \not \in P'$ is covering the same register that $w_p$ writes to.
Then if $\gamma \in \lit{Res}(C', P')$, then $w_p \gamma$ is in $\lit{Res}(C, P)$ where $P = P' \cup \{p\} \cup \{q\}$. 
\end{claim}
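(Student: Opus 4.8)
The plan is to verify directly that $w_p\gamma$ meets all three bullets in the definition of a reserving execution interval from $C$ by $P = P' \cup \{p\} \cup \{q\}$. Since $\gamma \in \lit{Res}(C',P')$ is reserving from $C'$ by $P'$, and $P' \subseteq P$, essentially all the work has already been done inside $\gamma$; the only new feature is the single prepended write $w_p$, and we must track how it affects the ``each written register is covered by a reserved process'' invariant.

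First I would check the membership/size condition: $P$ contains $P'$, so $|P| \geq |P'| \geq m+1$, and every step of $w_p\gamma$ is by a process in $P$ (the step $w_p$ is by $p \in P$, and every step of $\gamma$ is by a process in $P' \subseteq P$). Next, the covering invariant. Consider any time during $w_p\gamma$ and let $R_w$ be the set of registers written so far. If we are at the very first step (just after $w_p$), then $R_w = \{R\}$ where $R$ is the register $w_p$ writes to, and by hypothesis $q \neq p$, $q \notin P'$ is covering $R$ in configuration $C$; since $q$ has not moved, $q$ still covers $R$, and $q \in P$ is the reserved process for $R$. For any later time, we are at some point during $\gamma$ executed from $C'$; let $R_w'$ be the set of registers written so far during $\gamma$ at that point, so $R_w = R_w' \cup \{R\}$. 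Because $\gamma$ is reserving from $C'$ by $P'$, each register in $R_w'$ has its own distinct reserved process in $P'$ covering it. The register $R$ may or may not lie in $R_w'$: if $R \in R_w'$ it already has a reserved process from $P'$, and if $R \notin R_w'$ we use $q$ as its dedicated reserved process. Since $q \notin P'$, $q$ is distinct from all the reserved processes supplied by $\gamma$, so we indeed get one reserved process per register in $R_w$. Finally, the return condition: if some process in $P$ returns during $w_p\gamma$, it does not do so during $w_p$ (a write is not a return), so it returns during $\gamma$; since $\gamma$ is reserving from $C'$ by $P'$, any such return happens in the last step of $\gamma$, which is the last step of $w_p\gamma$. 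Moreover $\gamma \in \lit{Res}(C',P')$ guarantees $\gamma$ ends with a process returning, so $w_p\gamma$ ends with a process returning, i.e. $w_p\gamma \in \lit{Res}(C,P)$.

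I expect the main (very mild) obstacle to be bookkeeping around the covering invariant at the seam between $w_p$ and $\gamma$: one must be careful that $q$, which covers $R$ in $C$, is genuinely still poised to write $R$ after $w_p$ is applied (it is, since $q \neq p$ took no step) and is genuinely outside $P'$ so that it does not collide with the reserved processes that $\gamma$ already commits to registers in $R_w'$. Once the disjointness of $q$ from $P'$ is used to keep the ``one reserved process per register'' bookkeeping consistent, the rest is immediate from unwinding the definitions, so this claim is genuinely a direct corollary.
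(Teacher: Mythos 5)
Your proof is correct and is exactly the argument the paper has in mind: the paper states that \claimref{clm:prefix} ``follows immediately from the definition of reserving executions'' and gives no further detail, and your write-up is precisely that direct verification (the one point worth making explicit, which you do handle via $q \notin P'$, is that $q$ takes no steps during $\gamma$ and hence keeps covering the register written by $w_p$ throughout).
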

\subsubsection{New definition of valency}
We say that a configuration $C$ is $0$-valent$_U$ with respect to the set of processes $U$,
  if there exists a subset of at least $m+1$ processes $P \subseteq U$ and a reserving execution in $\lit{Res}(C, P)$ 
  that finishes when some process in $P$ returns $0$.
We call $C$ $0$-valent$_U^{m+1}$ w.r.t. $U$, 
  if there exists a subset of \emph{exactly} $m+1$ processes $P \subseteq U$ $(|P| = m+1)$,
  and a reserving execution interval in $\lit{Res}(C, P)$ returning $0$. 
We define $1$-valent$_U$ and $1$-valent$_U^{m+1}$ analogously.
If $U$ contains at least $m+1$ processes that have not returned,~\claimref{clm:reserved} implies that 
  every configuration is $0$-valent$_U^{m+1}$ or $1$-valent$_U^{m+1}$ (and thus $0$-valent$_U$ or $1$-valent$_U$).


As in our earlier definition in~\sectionref{sec:sqrt}, but unlike the standard definition,
  a configuration that is $0$-valent$_U^{m+1}$ can still also be $1$-valent$_U^{m+1}$
  in which case we call it bivalent$_U^{m+1}$.
Basically, a configuration is bivalent$_U^{m+1}$ if it is both 
  $0$-valent$_U^{m+1}$ due to some $P \subseteq U$ and 
  $1$-valent$_U^{m+1}$ due to some $Q \subseteq U$.
A configuration that is not bivalent$_U^{m+1}$ is called univalent$_U^{m+1}$.
Finally, similar to our earlier convention,
  we define a configuration to be $0$-univalent$_U^{m+1}$ if it is $0$-valent$_U^{m+1}$ but not $1$-valent$_U^{m+1}$.
On the other hand, a configuration that is $1$-valent$_U^{m+1}$ but not $0$-valent$_U^{m+1}$
  is called $1$-univalent$_U^{m+1}$.
Terms bivalent$_U$, univalent$_U$, $0$-univalent$_U$ and $1$-univalent$_U$ are defined analogously.

Next we prove a claim that lets us find reserving executions consisting of disjoint processes. 
\begin{claim}
\label{clm:disjoint}
Consider a configuration $C$ which is bivalent$_U$ w.r.t. $U$.
Assume that there are (possibly intersecting) sets of at least $m+1$ processes each $P \subseteq U$ and $Q \subseteq U$ 
  such that $|U| \geq |P| + |Q| + m$,
  and some reserving execution in $\lit{Res}(C, P)$ ends when $p \in P$ returns $0$,
  while some reserving execution in $\lit{Res}(C, Q)$ ends when $q \in Q$ returns $1$.
Then there are also disjoint sets of processes $P' \subseteq U$ and $Q' \subseteq U$ ($P' \cap Q' = \emptyset$), 
  such that an execution in $\lit{Res}(C, P')$ returns $0$ and an execution in $\lit{Res}(C, Q')$ returns $1$. 
Moreover, $m+1 \leq \min(|P'|, |Q'|) \leq \min(|P|, |Q|)$ and $\max(|P'|, |Q'|) \leq \max(|P|, |Q|)$. 
\end{claim}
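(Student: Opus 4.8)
The plan is to start from the given executions in $\lit{Res}(C,P)$ and $\lit{Res}(C,Q)$ and to "relocate" the processes used by the $1$-valent execution onto a fresh set of clones, so that the two executions use disjoint process sets. The key observation is that the algorithm is anonymous, so any process executing a reserving interval can be replaced by a clone with the same input that performs exactly the same steps. Concretely, let $\gamma_0 \in \lit{Res}(C,P)$ end with $p$ returning $0$ and let $\gamma_1 \in \lit{Res}(C,Q)$ end with $q$ returning $1$. I would take $P' := P$, and build $Q'$ as a set of $|Q|$ processes drawn from $U \setminus P$: since $|U| \ge |P| + |Q| + m \ge |P| + |Q|$, there are enough processes in $U$ outside of $P$ to form such a $Q'$. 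For each process in $Q$ that already lies outside $P$, keep it in $Q'$; for each process in $Q \cap P$, substitute a fresh process from $U \setminus (P \cup Q)$ with the same input. Because the system is anonymous and all these processes are in their initial state in $C$ (they have not taken steps in $\gamma_1$ before their first appearance — more precisely, we can replay $\gamma_1$ step by step, and whenever a process from $Q\cap P$ is scheduled, have its designated fresh substitute take the identical step instead), the resulting execution $\gamma_1'$ on $Q'$ is well-defined and leads to the corresponding process returning $1$.

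The remaining work is to check that $\gamma_1'$ is still \emph{reserving} from $C$ by $Q'$. This is where the bookkeeping happens: every step of $\gamma_1'$ is by a process in $Q'$ by construction; the "one reserved process per written register" condition transfers because the substitution is a bijection $Q \to Q'$ preserving states, so the reserved covering process for each register in $\gamma_1$ maps to a covering process in $Q'$ for the same register; and the "returns only in the last step" condition is immediate since $\gamma_1$ had it and the substitution preserves which step is the returning one. Also $|Q'| = |Q| \ge m+1$, so $Q'$ is a valid witness set. Thus $\gamma_1' \in \lit{Res}(C,Q')$ returns $1$, and $P' \cap Q' = P \cap Q' = \emptyset$ by the choice of $Q'$.

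Finally I would verify the size bounds. We have $|P'| = |P|$ and $|Q'| = |Q|$, so in fact $\min(|P'|,|Q'|) = \min(|P|,|Q|) \ge m+1$ and $\max(|P'|,|Q'|) = \max(|P|,|Q|)$, which satisfies the stated inequalities (the claim only asks for $\le \min(|P|,|Q|)$ and $\le \max(|P|,|Q|)$, and both hold with equality). One subtlety to address: the substitutes must not already have returned in $C$ — but they are in their initial states, so this is fine, and in any case the definition of $\lit{Res}$ only constrains returns during the interval. I expect the main obstacle to be stating the replay/substitution argument carefully enough that the reserving property is manifestly preserved at \emph{every} prefix (since reserving is a property of all prefixes): one must argue that at each point in $\gamma_1'$, the set of written registers and the covering assignment are the images under the substitution bijection of those in the corresponding prefix of $\gamma_1$, which follows from anonymity and the fact that the substitution is applied consistently from the first step onward. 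A cleaner alternative, if one wants to avoid re-describing clones, is to note that $U \setminus P$ contains a "copy" of every process, invoke the clone machinery from the Definitions section to obtain $Q'$ directly, and observe that reserving execution intervals are closed under this relabeling.
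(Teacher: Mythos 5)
There is a genuine gap. Your central step---replacing each process of $Q \cap P$ by a ``fresh'' substitute from $U \setminus (P \cup Q)$ that replays its steps of $\gamma_1$---rests on the assertion that these processes ``are in their initial state in $C$.'' The claim gives you no such thing: $C$ is an arbitrary reachable configuration, and the processes of $U$ may be in arbitrary, pairwise different states there (in the theorem where the claim is invoked, $U$ is instantiated as $T$, which at later induction levels contains processes that have already taken many steps). Anonymity only guarantees that two processes \emph{currently in the same state} behave identically; it does not place your substitute into the state of the process it replaces, and there is no guarantee that $U \setminus (P \cup Q)$ contains any process in the right state (or even with the right input). The fallback you mention---invoking the clone machinery to manufacture $Q'$---does not work either: a clone must shadow its process from the beginning of the execution, so producing one at $C$ would mean altering the execution that leads to $C$ and adding processes not accounted for in $U$, whereas the claim must deliver $P'$ and $Q'$ inside the given $U$ at the given $C$.

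The paper's proof sidesteps reproducing the $1$-deciding execution entirely. If $P \cap Q \neq \emptyset$ then $|P \cup Q| \leq |P|+|Q|-1$, so $|U|\geq|P|+|Q|+m$ leaves a set $H \subseteq U \setminus (P \cup Q)$ of $m+1$ processes; none of the processes in $U$ can have returned in $C$ (otherwise bivalency would fail), so Claim~\ref{clm:reserved} yields \emph{some} terminating reserving execution in $\lit{Res}(C,H)$, which returns either $0$ or $1$; one then pairs $H$ with whichever of $P$, $Q$ witnesses the opposite value. This is non-constructive in exactly the place where your argument tries to be constructive, and that is what makes it go through. If you want to keep your approach, you would have to add a hypothesis that the relevant processes are in identical states at $C$ --- which the claim, and its use in the theorem, do not provide.
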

\begin{proof}
None of the processes in $U$ may have already returned in configuration $C$,
  as that would contradict the existence of a reserving execution returning the other output. 
If $P$ and $Q$ do not intersect then we set $P' = P$ and $Q' = Q$.
Otherwise, we can find a set $H \subseteq U \setminus P \setminus Q$ of $m+1$ processes.
By~\claimref{clm:reserved}, $\lit{Res}(C, H)$ is non-empty, 
  and without loss of generality, some execution in $\lit{Res}(C, H)$ returns $0$.
Then, we set $P' = H$ and $Q' = Q$ (if all executions in $\lit{Res}(C, H)$ return $1$, we would set $P' = P$ and $Q' = H$).
\end{proof}
\subsubsection{The process-clone pairs and the proof}
As mentioned earlier, it is obviously not sufficient to simply cover registers with existing processes
  without any knowledge of what they are about to write.
In the proof of~\lemmaref{lem:sqrt} we used new clones that covered registers to block-overwrite these registers 
  back to the contents whose valency we knew.
In order to do something similar with existing processes, we associate a dedicated clone to each process.
The process and its clone remain in the same states and perform the same operations during the whole execution.

Usually, when we schedule a process to perform an operation,
  its clone performs the same operation immediately after the process.  
  Thus the pair of the process and the clone remain in the same state.
Under these circumstances, we can treat the pair of the process and its clone as a single process,
  because no process can distinguish the execution from when the clone would not take steps.
However, sometimes we will \emph{split} the pair by having only the process perform a write operation 
  and let the clone cover the register.
We will explicitly say when this is the case.
After we split the pair of process and clone in such a way, 
  we will not schedule the process to take any more steps and thus the clone will remain poised to write to the covered register.
After some delay, we will schedule the clone of the process to write, 
  effectively resetting the register to the value it had when the process wrote. 
Moreover, because meanwhile the process did not take any steps,
  after the write the clone will again be in the same state as its associated process.
Hence the pair of the process and clone will no longer be split, 
  and will continue taking steps in sync like a single process.

This is different from the way clones were used in the proof of~\lemmaref{lem:sqrt},
  because after the pair of the process and its clone is united, it can be split again.
Therefore, the same clone can reset the contents of registers written by its associated process multiple times, 
instead of requiring a new clone every time.

We call a split pair of a process and a clone \emph{fresh} as long as the register that 
  the process wrote to, and its clone is covering, has not been overwritten.
After the register is overwritten, we call the split pair \emph{stale}.

In addition, we also use cloning in a way similar to the proof of~\lemmaref{lem:sqrt}, 
  except that we do this at most constantly many times, as opposed to $r$ times, to reach the next configuration $C_{r+1}$.
Moreover, each time when we do this, we create duplicates of both the process and its corresponding clone.
This new process-clone pair is in the same state as the original pair, 
  and from there on behaves like a single new process similar to all other pairs.
We will always consider valency with respect to sets of processes whose pairs are not split. 
Therefore, the definition of valency does not need to change 
  when the clones keep taking steps immediately after their processes.

Sometimes, when considering process-clone pairs, none of which are split, 
  we may refer to them as processes, i.e. we may talk about a process taking steps or returning a value. 
As mentioned earlier, it is assumed that as long as the pair is not split, 
  the clone always follows and takes the same steps right after the process. 
Hence, in this context, a process taking a step means a pair taking a step.   

Now we are ready to prove the main result.
\begin{theorem}
\label{thm:linclone}
In the system of anonymous processes, 
  consider any correct consensus algorithm satisfying nondeterministic solo termination,
  with the property that every execution uses at most $m$ registers.
For each $r$ with $0 \leq r \leq m$, there exists a set $U$ containing $5m+6+2r$ process-clone pairs  
  such that a configuration $C_r$ is reachable through an execution $E_r$ 
  by processes and clones in $U$ with the following properties:
\begin{itemize}[noitemsep, nolistsep]
\item[1.] There exists a set $R$ of $r$ registers, that can be partitioned in two disjoint subsets $R = R_s \cup R_c$, where:
\begin{itemize}[noitemsep, nolistsep]
\item $R_s$ consists of all registers in the system that each have one fresh split pair on them, 
  last written by some process whose clone has not yet performed the write and is covering the register.
\item $R_c = R \setminus R_s$.
Each register in $R_c$ is covered by an unique pair of both a process and its clone.
\end{itemize}
Thus, each fresh pair is split on a different register in $R_s$,
  and an additional $|R_c|$ pairs are covering the registers in $R_c$.
Let $V$ be the set of these $|R_s| + |R_c| = r$ pairs.
\item[2.] There are at most $r$ stale split pairs in the system,
  that are all split on pairwise different registers from $R$.
Let $L$ be the set of these at most $r$ stale split pairs.
\item[3.] There exist disjoint sets of process-clone pairs that are not split 
  $P, Q \subseteq U \setminus V \setminus L$ with $|P| + |Q| \leq 2m+4$,
  such that an execution in $\lit{Res}(C_r, P)$ returns $0$ and an execution in $\lit{Res}(C_r, Q)$ returns $1$.\footnote{The pairs of processes in $P$ and $Q$ are not split, because all split pairs belong to $V \cup L$ (fresh to $V$ and stale to $L$). Also, the third condition implies that the configuration $C_r$ is bivalent$_{U \setminus V \setminus L}$.} 
\end{itemize}
\end{theorem}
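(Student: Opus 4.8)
The plan is to prove this by induction on $r$, mirroring the structure of \lemmaref{lem:sqrt} but with solo executions replaced by reserving executions and with clones used to maintain covering. For the base case $r = 0$, take $U$ with $5m+6$ pairs, let $E_0$ be the empty execution from the initial configuration with roughly half the pairs starting on input $0$ and half on input $1$, and set $R = R_s = R_c = \emptyset$, $V = L = \emptyset$. Bivalency$_{U}$ follows from \claimref{clm:reserved} applied to any two disjoint $(m+1)$-subsets of $U$ (one such set has a reserving execution returning $0$, or else we swap inputs); \claimref{clm:disjoint} then gives the disjoint $P, Q$ with $|P| = |Q| = m+1 \leq 2m+4$ as required by Condition 3, and Conditions 1–2 are vacuous.

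For the inductive step, assume $C_r$, $E_r$, $U$, $R = R_s \cup R_c$, $V$, $L$, $P$, $Q$ are given. By Condition 3 there are reserving executions $\alpha \in \lit{Res}(C_r, P)$ returning $0$ and $\beta \in \lit{Res}(C_r, Q)$ returning $1$. As in the square-root proof, I would argue that $\alpha$ must write to some register outside $R$: otherwise, after having the fresh clones in $R_s$ and both members of each pair covering $R_c$ perform a block write over $R$ (resetting all register contents), the resulting configuration is indistinguishable from $C_r$ to the processes of $Q$, so $\beta$ still returns $1$ — contradiction. Symmetrically $\beta$ writes outside $R$. Write $\alpha = \alpha' w_p \alpha''$ and $\beta = \beta' w_q \beta''$ where $w_p$, $w_q$ are the first writes outside $R$. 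Now I would run a valency/switching argument along the sequence of prefixes $B_i$ of $\gamma \beta' w_q$ (where $\gamma$ is the block write resetting $R$), exactly as in \lemmaref{lem:sqrt}, to locate a configuration that is bivalent$_{U'}$ with respect to the appropriate process set $U'$ — either the configuration after $E_r \alpha' \gamma \beta' w_q$, or one after $E_r \alpha' B_i w_p$. The commutativity argument for operation $o$ distinguishing the switching point goes through because $o$ is not a read and not to the register $w_p$ writes. This new register (written by $w_p$ or by $w_q$) together with $R$ gives $r+1$ written registers.

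The delicate part — and what I expect to be the main obstacle — is bookkeeping the process-clone pairs so that Conditions 1 and 2 hold at level $r+1$: I must arrange that the newly written register is covered either by a fresh split pair or by a full pair, that the pairs which were fresh on $R_s$ and got overwritten by $\gamma$ become stale (and are added to $L$, which may only grow to size $r+1$), that the reset by $\gamma$ re-unites each split pair on $R_c$ so those registers are again covered by full pairs, and that $P'$, $Q'$ can be found disjoint and of total size $\leq 2m+4$ inside $U \setminus V \setminus L$. Here I would use \claimref{clm:prefix} to prepend the write $w_p$ (resp. $w_q$) to a reserving execution from the new configuration — this is exactly why each inductive register needs a covering witness — and \claimref{clm:disjoint} to re-disjointify $P'$ and $Q'$ while keeping both sizes between $m+1$ and $2m+2$. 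Duplicating the process-clone pairs that play the roles of $p$, $q$, and their clones (constantly many per step, hence $O(1)$ new pairs per level, accounting for the $2r$ term in $|U| = 5m+6+2r$) lets me keep $V$, $L$, $P$, $Q$ on disjoint process sets. Checking that the total budget $5m+6+2(r+1)$ suffices — roughly $|V| \leq r+1$, $|L| \leq r+1$, $|P|+|Q| \leq 2m+4$, plus the clones and a spare $(m+1)$-pool for \claimref{clm:disjoint} and \claimref{clm:reserved} — is the routine but essential counting that pins down the constant $5$.
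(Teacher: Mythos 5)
Your overall architecture (induction on $r$, reserving executions in place of solo executions, re-usable process--clone pairs for covering, a switching-point argument, and \claimref{clm:reserved}, \claimref{clm:prefix}, \claimref{clm:disjoint} for bookkeeping) matches the paper, and your base case is essentially the paper's. But the central step of your induction has a genuine gap. You argue that after $\alpha$ one can block-write over all of $R$, ``resetting all register contents,'' reach a configuration indistinguishable from $C_r$ to the processes of $Q$, and replay $\beta$. This fails on the registers of $R_c$: the pairs covering $R_c$ are merely covering, and --- unlike the freshly manufactured clones of \lemmaref{lem:sqrt}, which were constructed to be poised to write exactly the value currently in the register --- there is no guarantee that the values these existing pairs are about to write equal the contents of those registers in $C_r$. (The paper flags precisely this obstacle: it is not sufficient to cover registers with existing processes ``without any knowledge of what they are about to write.'') So the block write does not restore the $C_r$ contents of $R_c$, the resulting configuration is not indistinguishable from $C_r$, and $\beta$ cannot be rerun. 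The same defect undermines your switching walk along the prefixes of $\gamma\beta' w_q$, whose far endpoint you anchor at ``$\beta''$ returns $1$.''

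The paper's resolution, absent from your proposal, is to abandon $\beta$ and $C_r$ as the $1$-side valency anchor. It defines $\gamma_c$ as a block write to $R_c$ by the processes only (deliberately creating fresh split pairs), lets $D$ be the configuration reached by $E_r\gamma_c$, and assumes without loss of generality that $D$ is $1$-valent$_T^{m+1}$ with respect to the spare pool $T = U\setminus V\setminus L\setminus P\setminus Q$; only the $R_s$ registers are ever ``reset,'' by the trailing clones via $\gamma_s(\cdot)$, and those writes genuinely reproduce the $C_r$ values. Both the argument that $\alpha$ must write outside $R$ (\claimref{clm:alphaout}) and the case analysis (\claimref{clm:casean}) are then run against $D$: Case 1 walks along the suffix $w_p\alpha''$ of $\alpha$ itself, and Case 2 walks along $\gamma_s(\alpha')\gamma_c$ toward a configuration indistinguishable from $D$, with the separating operation $o$ handled by duplicating its pair. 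Without this change of anchor, your inductive step does not go through as written.
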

\begin{proof}
The proof is by induction on $r$, with the base case $r = 0$.
Out of the $5m + 6$ processes-clone pairs, 
  half of them start with an input $0$ and half start with an input $1$.
We let $C_0$ be the initial state,
  $P$ be a set of some $m+1$ pairs with input $0$, and
  $Q$ be a set of some $m+1$ pairs with input $1$.
The first two properties are trivially satisfied; also $P \cap Q = \emptyset$ and $|P| + |Q| = 2m+2$.
By~\claimref{clm:reserved} and correctness of consensus,
  there is a reserving execution in $\lit{Res}(C_0, P)$ that decides $0$,
  and a reserving execution in $\lit{Res}(C_0, Q)$ that decides $1$ ($C_0$ is bivalent$_U$).
Observe that the pairs are not split and for the purposes of valency we can just consider the steps of processes.

Now, let us assume induction hypothesis for some $r$, i.e. the existence of $E_r$ and $C_r$ with the required three properties,
  and prove the step for $r+1$ by extending $E_r$ to $E_{r+1}$, resulting in the configuration $C_{r+1}$.
Let $U$, $P$, $Q$, $V$, $L$ and $R = R_s \cup R_c$ all be defined as in the theorem statement for $r$. 
  Our goal is to construct sets $U'$, $P'$, $Q'$, $V'$, $L'$ and $R' = R_s' \cup R_c'$ for $r+1$. 
In $U' \setminus U$ we have two more process-clone pairs available that have not taken steps and 
  can be used to clone an existing process-clone pair.
Let $T$ denote $U \setminus V \setminus L \setminus P \setminus Q$.
Since $|V| = r$, $L \leq r$ and $|P| + |Q| \leq 2m+4$, we have $|T| \geq 3m+2$.

For all but $|R_s| + |L|$ split pairs both processes and clones are in the same states, 
  about to perform the same operations.
By definition, each stale pair in $L$ is split on a different register from $R$.
In the following argument, we extend the execution from $E_r$ to $E_{r+1}$ by steps of processes and clones not in $L$.
This can introduce new stale split pairs and the resulting configuration $C_{r+1}$ 
  may not immediately satisfy the second property.
We will then show how to modify the extension and unite some stale split pairs, 
  such that the resulting configuration satisfies all properties, including the second property with the new $L'$.
	
Let $\alpha \in \lit{Res}(C_r, P)$ be the reserving execution interval that returns $0$,
  and let $\beta \in \lit{Res}(C_r, Q)$ be the reserving execution interval that returns $1$.
Notice that each time a process in $P$ or $Q$ takes a step in $\alpha$ or $\beta$, 
  its clone performs an identical step immediately after.
The execution $E_r \alpha$ ends with a process-clone pair $p \in P$ returning $0$ and 
  the execution $E_r \beta$ ends with a process-clone pair $q \in Q$ returning $1$. 

Each register in $R_c$ was covered by some pair of both a process and its clone in $V$.
Let $\gamma_c$ be a block write to all registers in $R_c$ by only the processes but not the clones 
  of these respective covering pairs: i.e. after each write we get a new fresh split pair.
Consider a configuration $D$ reached from $C_r$ by executing this block write, i.e. a configuration reached after $E_r \gamma_c$.
Assume that $D$ is $1$-valent$_T^{m+1}$, without loss of generality, because it has a valency.
For any execution interval $e$, let us denote by $W(e)$ the set of registers written to during $e$.
Hence, $R_s \cap W(e)$ is the set of registers in $R_s$ that are written-to during $e$.
Each register in $R_s$ is covered by a clone of a split pair whose process has already performed the write and is stopped.
Define $\gamma_s(e)$ as a block write to all registers in $R_s \cap W(e)$ by these trailing clones of the split pairs in $V$: 
  i.e. after each write another clone catches up with its process and a previously split pair is united.
Basically, if we run an execution interval $e$ from $C_r$ that changes contents of some registers in $R_s$,
  we can then clean these changes up by executing $\gamma_s(e)$, 
  which leads to all registers in $R_s$ having the same contents as in $C_r$.

Using a crude covering argument we can first show that
\begin{repclaim}{clm:alphaout}
The execution interval $\alpha$ must contain a write operation outside $R$.
\end{repclaim}
\noindent Based on this we can write $\alpha = \alpha' w_p \alpha''$, 
  where $w_p$ is the write operation to a register $\lit{reg} \not \in R$, performed by some process-clone pair $p \in P$.
    
Looking ahead, when we reach $C_{r+1}$, the new set of registers $R'$ will be $R \cup \{\lit{reg}\}$.
Next, we prove the following claim using an FLP-like case analysis:
\begin{repclaim}{clm:casean}
We can extend execution $E_r$ (i.e. from $C_r$) with an execution interval $e$ and reach a configuration 
  satisfying the first and the third inductive requirements to be $C_{r+1}$ 
  with a properly defined $U'$, $P'$, $Q'$, $V'$ and $R' = R_s' \cup R_c'$,
  and with all process-clone pairs that are not split being in sync.
But the second property is not immediately satisfied.
All stale split pairs from $L$ remain stale and split, but some pairs that were fresh and split on registers 
  in $R_s \cap W(e)$ may have become stale in $C_{r+1}$ 
  (because neither the process nor the clone in the split pair has taken steps while the register was overwritten in $e$). 
However, these are the only possible new stale split pairs in $C_{r+1}$, and they do not belong to the new sets $V' \cup P' \cup Q'$.
\end{repclaim}
The proofs of these claims are provided later. 
In order to finish the proof of the theorem, we need to show how to construct $L'$.
According to the above~\claimref{clm:casean} we can extend the execution to reach the next configuration $C_{r+1}$ 
  satisfying first and third but not the second property about the stale split pairs $L'$.
In $C_r$ we had at most $r$ stale pairs in the system, each split on a different register, and $L$ was the set of these pairs.
But on the way to reaching $C_{r+1}$, we may have introduced new stale pairs in the system.
According to~\claimref{clm:casean} these must be the pairs that were fresh and split on registers in $R_s \cap W(e)$ in $C_r$,
  and whose associated register in $R_s$ has been overwritten during $e$, making them stale in $C_{r+1}$.

The set of all stale pairs in $C_{r+1}$ may not satisfy the requirements imposed for $L'$,
  since there could already have been a stale pair split on a register in $R_s \cap W(e)$ in $L$ (in $C_r$).
Then two stale pairs would be split on this register in $C_{r+1}$, violating the second property.
However, for each such register in $R_s \cap W(e)$, 
  we know a stale pair $\rho \in L$ was split on it in $C_r$,
  and that this register was written-to during extension $W(e)$.
We now modify the extension $e$; we add a single write by the clone of the stale split pair $\rho$
  immediately before a write operation to the same register that was already in $e$.
This way, no pair other than the clone of $\rho$ observes a difference between the two executions, 
  and we will use the configuration reached by the modified execution as $C_{r+1}$.
Because of this indistinguishability, the new $C_{r+1}$ still satisfies other required properties.
Moreover, the pair $\rho$ is not split anymore; it is united since the clone has caught up with its process.

We can do the above modification to the execution for each register in $R_s \cap W(e)$ 
  that previously ended up with two stale split processes in $C_{r+1}$.
Let the modified execution extension be $e'$.
In $e'$, some stale split pairs from $L$ are united, indistinguishably to all other processes and clones, 
  leading to a configuration $C_{r+1}$, that still satisfies the first and third properties,
  and has at most one stale pair split on any register.
We take $L'$ to be the set of stale split pairs.
By construction, all stale pairs are split on registers in $R'$ and no two on the same register, 
  so we do have $|L'| \leq r+1$ as desired.
Hence, we have reached configuration $C_{r+1}$ satisfying all properties and completing the proof.
\end{proof}
\begin{corollary}
In a system of $n$ anonymous processes, 
  any consensus algorithm satisfying non-deterministic solo termination must use $\Omega(n)$ registers.
\end{corollary}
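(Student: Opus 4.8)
The plan is to read the corollary off \theoremref{thm:linclone} with essentially no extra work, since that theorem and its supporting claims (\claimref{clm:reserved}, \claimref{clm:disjoint}, \claimref{clm:alphaout}, \claimref{clm:casean}) already carry the entire argument. Fix the given consensus algorithm, and recall the convention $m = n/20$. First I would split into two cases. If some execution of the algorithm writes to more than $m$ registers, then the worst-case space complexity already exceeds $n/20$ and there is nothing left to prove. Otherwise every execution uses at most $m$ registers, which is exactly the standing hypothesis under which \theoremref{thm:linclone} was stated, so the theorem applies.

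Second, I would instantiate \theoremref{thm:linclone} at the maximal value $r = m$. This produces an execution $E_m$ reaching a configuration $C_m$ for which property~1 supplies a set $R$ with $|R| = m$. Since $R$ starts out empty at $r = 0$ and gains exactly one register per inductive step — the target $\lit{reg}$ of the write $w_p$, which is performed inside the execution extension — every register in $R$ has been written to during $E_m$. Hence $E_m$ is a single execution of the algorithm using at least $m = n/20$ registers, so the space complexity is $\Omega(n)$.

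The only thing to double-check is feasibility: the whole construction must fit inside a system of just $n$ anonymous processes. \theoremref{thm:linclone} uses a set $U$ of $5m+6+2r$ process-clone pairs, i.e.\ $2(5m+6+2r)$ physical processes; at $r = m$ this equals $14m+12 = \tfrac{7}{10}n + 12$, which is at most $n$ once $n$ exceeds a small constant (and for smaller $n$ the statement $\Omega(n)$ is vacuous). This is precisely why the cutoff was chosen to be $m = n/20$ rather than something larger: the inductive construction spends $\Theta(m)$ processes on clones and on the reserving and covering sets $P, Q, V, L$, and the factor $1/20$ leaves enough room. There is no genuine obstacle at this stage — all of the difficulty has already been absorbed into \theoremref{thm:linclone} and its four claims.
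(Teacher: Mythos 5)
Your overall route is the same as the paper's: dispose of the trivial case, apply \theoremref{thm:linclone} at $r=m$, and check that $2(5m+6+2m)=14m+12\leq n$ processes suffice. But there is one genuine gap: your assertion that every register in $R$ has already been written to during $E_m$, justified by saying that the new register $\lit{reg}$ is ``the target of the write $w_p$, which is performed inside the execution extension.'' That is not what \theoremref{thm:linclone} guarantees, and it is false in general. Property~1 only says that registers in $R_s$ have been written (they carry a fresh split pair); registers in $R_c$ are merely \emph{covered} by a process--clone pair. Indeed, in Case~1.1 with $j=0$ the extension is just $\alpha'$ and $w_p$ has not yet occurred (the paper notes this explicitly: ``if $j=0$, register $\lit{reg}$ is not yet written, but the next operation in $\alpha$ is $w_p$ by a pair covering $\lit{reg}$''), and in Case~2 the extension $\alpha' B_j$ never contains $w_p$ at all --- there $\lit{reg}$ enters $R_c'$ covered by the pair with the \emph{pending} write $w_p$. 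So at $C_m$ some registers of $R_c$ may never have been written in $E_m$.

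The repair is short and is exactly what the paper does: from $C_m$, extend the execution by a block write to $R_c$ performed by the covering pairs in $V$ (which exist and are distinct by Property~1). In the resulting execution all $m=n/20$ registers of $R$ have been written, giving the $\Omega(n)$ bound. With that one additional step your argument matches the paper's proof.
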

\begin{proof}
\theoremref{thm:linclone} directly implies the $\Omega(n)$ lower bound on the number of registers used in some execution.
If $n$ is the number of anonymous processes and no execution uses more than $m = n/20$ registers,
  by~\theoremref{thm:linclone} we can reach $C_{m}$ for large enough $n$, and we have enough processes $n \geq 10m + 12 + 4m$.
In $C_{m}$ there are $m$ registers in $R$, each of which has either already been written-to ($R_s$)
  or are covered by unique processes ($R_c$).
We could perform a block write to $R_c$ by covering processes from $V$ in $C_{m}$, 
  after which in the resulting execution $m = n/20 = \Omega(n)$ different registers would have been written to.
\end{proof}
We now provide the delayed proofs for the claims.
\begin{claim}
\label{clm:alphaout}
The execution interval $\alpha$ must contain a write operation outside $R$.		 
\end{claim}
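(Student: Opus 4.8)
The plan is to argue by contradiction. Suppose $W(\alpha) \subseteq R$, i.e.\ every write of $\alpha$ is to a register in $R$. I will then exhibit a single execution in which one pair returns $0$ and another returns $1$, contradicting correctness of consensus. The lever is the configuration $D$ reached by $E_r\gamma_c$, which we have already assumed, without loss of generality, to be $1$-valent$_T^{m+1}$: there is a set $P_1 \subseteq T$ with $|P_1| = m+1$ and a reserving execution $\delta \in \lit{Res}(D, P_1)$ whose last step has some pair in $P_1$ returning $1$. Note that $T = U \setminus V \setminus L \setminus P \setminus Q$, so $P_1$ is disjoint from $P$, $Q$, $V$ and $L$.

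First I would consider the execution $E_r\,\alpha\,\gamma_c\,\gamma_s(\alpha)$ and compare the configuration $C'$ it reaches with $D$. Because $\alpha$ is taken only by pairs in $P$, and $P \cap V = \emptyset$, the pairs of $V$ covering $R_c$ are still poised to write the same values after $\alpha$, so $\gamma_c$ remains a legal block write there and resets every register of $R_c$ to its contents in $D$; likewise the trailing clones of the fresh split pairs on $R_s$ are untouched by $\alpha$, so $\gamma_s(\alpha)$ resets every register of $R_s \cap W(\alpha)$ to its $C_r$ contents, which equal its contents in $D$ (as $\gamma_c$ does not touch $R_s$). Under the assumption $W(\alpha) \subseteq R$, no register outside $R$ is written during $\alpha\,\gamma_c\,\gamma_s(\alpha)$, so those too agree with $D$. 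Hence every register has identical contents in $C'$ and in $D$. Moreover none of the steps of $\alpha$, $\gamma_c$, $\gamma_s(\alpha)$ is by a pair in $T$, so every pair of $T$, in particular every pair of $P_1$, is in the same state in $C'$ as in $C_r$, and hence as in $D$ (reaching $D$ from $C_r$ uses only $V$).

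It follows that $P_1$ cannot distinguish $C'$ from $D$. Since every step of the reserving interval $\delta$ is taken by a pair of $P_1$, and these pairs read exactly the values they would read in $D$, they follow $\delta$ verbatim from $C'$, so $E_r\,\alpha\,\gamma_c\,\gamma_s(\alpha)\,\delta$ ends with a pair of $P_1$ returning $1$. But $E_r\alpha$ ends with $p \in P$ returning $0$, so this single execution contains a pair returning $0$ (inside $\alpha$) and a pair returning $1$ (at the end of $\delta$) — the desired contradiction. Consequently $\alpha$ writes outside $R$, and we may write $\alpha = \alpha' w_p \alpha''$ with $w_p$ its first write to a register $\lit{reg} \notin R$.

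I expect the only real work to be the bookkeeping in the second paragraph: checking that $\gamma_c$ is still a valid block write after $\alpha$ (this is exactly where $P \cap V = \emptyset$ is used), that the clones invoked by $\gamma_s(\alpha)$ are precisely the fresh split pairs of $V$ on $R_s$ and write back the $C_r$ values, and that ``indistinguishable to every pair in $T$'' suffices to transplant all of $\delta$ — crucially without ever needing to know the absolute state of the $T$-pairs, only that $\alpha\,\gamma_c\,\gamma_s(\alpha)$ leaves it unchanged. There is essentially no case analysis; it is a covering/indistinguishability argument together with careful accounting of register contents.
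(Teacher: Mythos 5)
Your proposal is correct and follows essentially the same covering/indistinguishability argument as the paper: assume $W(\alpha)\subseteq R$, erase $\alpha$'s effect on $R$ via the block writes $\gamma_c$ and $\gamma_s(\alpha)$, observe the result is indistinguishable from $D$ to all pairs in $T$, and transplant a reserving execution from $\lit{Res}(D,\cdot)$ returning $1$ to contradict the $0$ returned at the end of $\alpha$. The only difference is that you apply $\gamma_c$ before $\gamma_s(\alpha)$ rather than after, which is immaterial since the two block writes act on disjoint registers via disjoint sets of processes.
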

\begin{proof}
Assume the contrary.
We know that the execution $E_r \alpha$ decides $0$.
No process or clone that takes a step in $\gamma_c$ or $\gamma_s(\alpha)$ appears in $\alpha$
  (they belong to $V$, disjoint from $P$ and $Q$),  
  and by definition, no process or clone from $T$ takes a step in $\alpha$, $\gamma_c$ or $\gamma_s(\alpha)$. 
Thus, to all processes (and clones) in $T$, 
  the configurations after $E_r \alpha \gamma_s(\alpha) \gamma_c$ and after $E_r \gamma_c$,
  which is configuration $D$, are indistinguishable.
This is because no process (or clone) in $T$ has taken steps, the registers in $R$ contain the same values, 
  and other registers were not touched during $\alpha$, $\gamma_s(\alpha)$ or $\gamma_c$.
Configuration $D$ is $1$-valent$_T^{m+1}$, so some extension from $E_r \alpha \gamma_s(\alpha) \gamma_c$ by 
  an execution interval from $\lit{Res}(D, T)$ decides $1$.
This contradicts the correctness of the algorithm.
\end{proof}
\begin{claim}
\label{clm:casean}
We can extend execution $E_r$ (i.e. from $C_r$) with execution interval $e$ and reach a configuration 
  satisfying the first and the third inductive requirements to be $C_{r+1}$ 
  with properly defined $U'$, $P'$, $Q'$, $V'$ and $R' = R_s' \cup R_c'$,
  and with all process-clone pairs that are not split being in sync.
But the second property is not immediately satisfied.
All stale split pairs from $L$ remain stale and split, but some pairs that were fresh and split on registers 
  in $R_s \cap W(e)$ may have become stale in $C_{r+1}$ 
  (because neither the process nor the clone in the split pair has taken steps while the register has been overwritten in $e$). 
However, these can be the only new stale split pairs in $C_{r+1}$ and they do not belong to the new sets $V' \cup P' \cup Q'$.
\end{claim}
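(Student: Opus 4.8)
The plan is to run an FLP-style bivalence argument parallel to the inductive step of \lemmaref{lem:sqrt}, replacing solo executions by reserving executions and carrying out block-overwrites with the trailing clones of the fresh split pairs (the block write $\gamma_s(\cdot)$) and with the covering pairs of $R_c$ (the block write $\gamma_c$). By \claimref{clm:alphaout} we may write $\alpha = \alpha'\,w_p\,\alpha''$, where $w_p$ is the first write of $\alpha$ to a register $\lit{reg}\notin R$, performed by a pair $p\in P$; when $w_p$ is taken we split the pair $p$, letting only its process write while its clone stays poised on $\lit{reg}$, so that $p$ becomes a fresh split pair on $\lit{reg}$, and we set $R' = R\cup\{\lit{reg}\}$ with $\lit{reg}\in R_s'$. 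I will study the chain of configurations reached by $E_r\,\alpha'\,B_i\,w_p$, where $B_i$ ranges over the prefixes of the block write $\gamma_s(\alpha')\,\gamma_c$ (together with the configuration $E_r\,\alpha'\,\gamma_s(\alpha')\,\gamma_c$ itself), and show that one of them is bivalent with respect to the appropriate pair set and can serve as $C_{r+1}$.

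The two ends of the chain carry opposite valencies. From $E_r\,\alpha'\,w_p$ the pair $p$ continues with $\alpha''$, a reserving continuation by $P$ that returns $0$, so that configuration is $0$-valent (witnessed by $P$). At the other end, $E_r\,\alpha'\,\gamma_s(\alpha')\,\gamma_c$ is indistinguishable to every pair of $T$ from $D$ --- exactly the indistinguishability used in the proof of \claimref{clm:alphaout}, since no pair of $T$ moves during $\alpha'$, $\gamma_s(\alpha')$ or $\gamma_c$ and the register contents match those of $D$ --- so it is $1$-valent$_T^{m+1}$, and \claimref{clm:prefix} (applied with the clone of $p$ covering $\lit{reg}$) carries this $1$-(uni)valency across the trailing $w_p$. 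In every configuration of the chain exactly the $r+1$ registers of $R'$ are covered, as required by property~1: $\lit{reg}$ by the split pair $p$; each register of $R_c$ by the fresh split pair created in $\gamma_c$; each register of $R_s$ not yet overwritten by its original fresh split pair from $V$; and each register of $R_s$ already overwritten by a reserved process of the relevant reserving interval (its clone is in sync, so the pair may be taken to cover that register and join $R_c'$). These data define $U'$, $V'$, $R_s'$, $R_c'$, with $|V'| = r+1$.

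Now suppose for contradiction that no configuration of the chain is bivalent with respect to the pairs not committed to $V'$ or to any current stale pair. Then the $i=0$ end is $0$-univalent, the $i=\ell$ end is $1$-univalent, and there is a switching index where $X = E_r\,\alpha'\,B_i\,w_p$ is $0$-univalent while $Y = E_r\,\alpha'\,B_{i+1}\,w_p$ is $1$-univalent; let $o$ be the single step $B_{i+1}$ inserts before $w_p$. As in \lemmaref{lem:sqrt}, $o$ is not by the (still unsplit) pair $p$, and is neither a read of $\lit{reg}$ nor a write to $\lit{reg}$ --- otherwise $p$ would run $w_p\,\alpha''$ identically from $X$ and from $Y$ and decide the same value, contradicting the opposite univalencies --- so $o$ and $w_p$ commute. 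Taking a reserving execution from $Y$ that returns $1$ (its witness set avoids $p$), commutativity lets us prepend $o$ and run it from $X$, a reserving execution from $X$ returning $1$, contradicting the $0$-univalency of $X$. Hence some chain configuration $C_{r+1}$ is bivalent, and \claimref{clm:disjoint} applied to it (the uncommitted pairs of $U'$ comfortably exceed $|P|+|Q|+m$) yields disjoint reserving witnesses $P',Q'$ with $|P'|+|Q'|\le|P|+|Q|\le 2m+4$, so properties~1 and~3 hold. The only pairs that changed split-status along $e$ are the fresh pairs of $V$ whose $R_s$-register was overwritten somewhere in $e$: they become stale, and since they never entered $V'$, $P'$ or $Q'$, they are exactly the new stale split pairs that the deferred second property --- repaired in the proof of \theoremref{thm:linclone} --- must absorb into $L'$.

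The main obstacle is not the valency core (essentially FLP together with \lemmaref{lem:sqrt}) but the accounting for process-clone pairs: one must track which pairs are split and on which register, which stay in sync, and how $U,V,L,P,Q,T$ evolve, all the while keeping the cardinalities exact ($|V'| = r+1$, $|L'|\le r+1$, $|P'|+|Q'|\le 2m+4$) and leaving at least $m+1$ pairs free for reserving executions. The most delicate pieces are establishing the $1$-valency of the far endpoint after $w_p$ has disturbed $\lit{reg}$ (which is where \claimref{clm:prefix} and the clone covering $\lit{reg}$ are essential), and checking that every register written during $e$ remains covered --- either by a fresh split pair or, via the reserving property of $\alpha'$, by a reserved process --- so that property~1 survives.
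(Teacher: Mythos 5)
Your single-chain argument ($E_r\,\alpha'\,B_i\,w_p$ with $B_i$ ranging over prefixes of $\gamma_s(\alpha')\,\gamma_c$) does not survive the fact that valency here is relative to a fixed witnessing set. The two endpoint valencies you establish are witnessed by different sets: at $i=0$ the $0$-valency comes from $P$ running $\alpha''$, while at $i=\ell$ the $1$-valency is inherited from $D$ and is witnessed by subsets of $T$. A configuration can perfectly well be $0$-valent with respect to $P$ and $1$-univalent$_T^{m+1}$ at the same time, so ``no chain configuration is bivalent'' does not force a $T$-relative switching index, and the crossing your argument relies on need not exist. Your device for pushing the $1$-valency of $E_r\,\alpha'\,B_\ell$ forward across the trailing $w_p$ also misuses \claimref{clm:prefix}: that claim converts a reserving execution from the configuration \emph{after} a covered write into one from the configuration \emph{before} it, i.e.\ it transfers valency backward across $w_p$, not forward. (In \lemmaref{lem:sqrt} the forward transfer worked because valency was defined by solo executions of $p$ itself; here $p \notin T$, so $w_p$ cannot be the first step of any reserving execution witnessing $T$-valency.)

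Second, even at a genuine switching point your commutativity argument does not produce a contradiction in this framework. The extra step $o$ is taken by a pair from $V$, hence outside $T$; prepending $o$ to a $1$-returning reserving execution of some $Q' \subseteq T$ from $Y$ yields a reserving execution from $X$ whose witness set contains $o$'s pair, i.e.\ it shows $X$ is $1$-valent with respect to $T \cup \{\dots\}$ --- which is perfectly compatible with $X$ being $0$-univalent$_T^{m+1}$. This is exactly why the paper does not argue by contradiction here: it case-splits on the valency of $E_r\,\alpha'$, walks along $w_p\,\alpha''$ in one case (where the $0$-valent endpoint is ``hard'' because a process has actually returned $0$, so no witnessing set can make it $1$-valent) and along $\gamma_s(\alpha')\,\gamma_c$ \emph{without} $w_p$ in the other (placing $\lit{reg}$ in $R_c'$, covered but unwritten, rather than in $R_s'$ as you do), and in the all-univalent subcases it uses the switching operation \emph{constructively}: the two fresh pairs of $U' \setminus U$ duplicate the pair performing $o$, and \claimref{clm:prefix} then manufactures a $0$-returning reserving witness $P'$ of size $m+3$. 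Your proposal has no counterpart to this construction, so the univalent-everywhere case --- which is a real possibility, not a contradiction --- is left unhandled.
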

\begin{proof}
The proof works by case analysis. We use the notation from~\theoremref{thm:linclone}.
$T$ does not contain any split pairs, so we can consider valency with respect to processes in $T$.  

\paragraph{Case 1: the configuration reached by the execution $E_r \alpha'$ is $1$-valent$_T^{m+1}$:}
Let $\ell$ be the length of $w_p \alpha''$ and $A_j$ be a prefix of $w_p \alpha''$ of size $j$ for $0 \leq j \leq \ell$.
Here we consider steps of process-clone pairs from $P$, 
  i.e. the difference between $A_j$ and $A_{j+1}$ is the same operation performed twice by a process and its clone,
  and $l$ counts these couples of identical operations,
  as illustrated in~\figureref{fig:case1}.
Pairs in $P$ are not split, as by the inductive hypothesis only pairs in $V \cup L$ are split, 
  and $(V \cup L) \cap P = \emptyset$.

\begin{figure}
\label{fig:case1}
\caption{Case 1}
\resizebox{\textwidth}{!}{\includegraphics{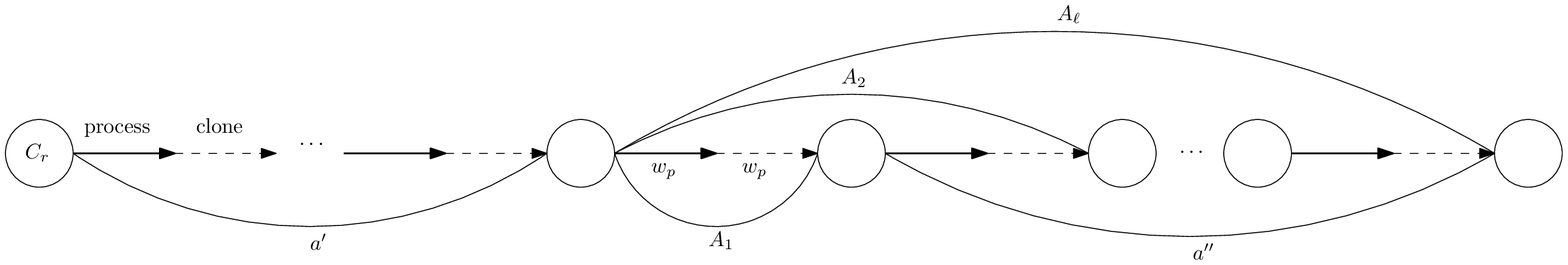}}
\end{figure}

We consider two further subcases.

\paragraph{Case 1.1: for some $0 \leq j \leq \ell$, 
  the configuration reached by the execution $E_r \alpha' A_j$ is bivalent$_T^{m+1}$:}
In this case, we let $C_{r+1}$ be precisely the configuration reached by $E_{r+1} = E_r \alpha' A_j$, 
  and let $R' = R \cup \{\lit{reg}\}$.
Clearly, $|R'| = r+1$.
Moreover, since $C_{r+1}$ is bivalent$_T^{m+1}$ there are two subsets of $m+1$ process-clone pairs in $T$ 
  and respective reserving execution intervals from $C_{r+1}$ that return $0$ and $1$.
Since $|T| \geq 3m+2$, by~\claimref{clm:disjoint}, we can actually find $P' \subseteq T$ and $Q' \subseteq T$, 
  with $P' \cap Q' = \emptyset$ and $|P'| = |Q'| = m+1$, such that
  an execution in $\lit{Res}(C_{r+1}, P')$ returns $0$ and an execution in $\lit{Res}(C_{r+1}, Q')$ returns $1$.
These sets $P'$ and $Q'$ are the new sets of pairs for $C_{r+1}$, and as required $|P'| + |Q'| = 2m+2 \leq 2m+4$.

The new set $R_s'$ will be $R_s \setminus (R_s \cap W(\alpha' A_j))$, i.e. the registers from $R_s$ 
  that have not been touched during our execution extension $\alpha' A_j$ from $C_r$ to $C_{r+1}$.
For each of these registers we still have the same one pair from $V$ split on it, 
  and since this pair was fresh in $C_r$ and the register has not been written to during the extension,
  it is still fresh in $C_{r+1}$ as required.  
There are no other fresh split pairs in $C_{r+1}$:
  no new split pairs were introduced during the extension,
  and the rest of fresh pairs in $C_r$ were split on $R_s \cap \alpha' A_j$.
These pairs are not fresh in $C_{r+1}$, as their register was written to during the extension.

The set $R_c'$ is simply $(R \cup \{\lit{reg}\}) \setminus R_s' = R_c \cup (R_s \cap W(\alpha' A_j)) \cup \{\lit{reg}\}$. 	
We must show that there is a unique process-clone pair covering each of these registers.
For each register in $R_c$, we take the same pair from $V$ that was associated and covering it in $C_r$.
For each register in $(R_s \cap W(\alpha' A_j)) \cup \{\lit{reg}\}$, 
  we find a unique pair from $P$ covering it in $C_{r+1}$.
Since $\alpha$ is a reserving execution interval from $C_r$, 
  all its prefixes including $\alpha' A_j$ are also reserving.
Thus, in $C_{r+1}$ for each register that has been ever written to during $\alpha' A_j$, 
  in particular for registers in $(R_s \cap W(\alpha' A_j)) \cup \{\lit{reg}\}$, 
  we find and associate a unique covering pair in $P$.
Technically, if $j=0$, register $\lit{reg}$ is not yet written, 
  but the next operation in $\alpha$ is $w_p$ by a pair covering $\lit{reg}$.

The set $V'$ contains all $r+1$ pairs that we associated with registers in $R'$, so $V' \subseteq V \cup P$.
The number of stale split pairs may have increased, however.
We still have pairs in $L$ plus $|R_s \cap W(\alpha' A_j)|$ of the previously fresh pairs that are now stale.
We deal with this in~\theoremref{thm:linclone} by modifying the execution extension to unite some stale pairs from $L$, 
  leaving us with a desired subset $L'$ of at most $r+1$ stale pairs.

Finally, as $P', Q' \subseteq T$, and $T$ was disjoint from $V$, $P$, $Q$ and $L$,
  $(P' \cup Q') \cap (V' \cup L') = \emptyset$ as required.

\paragraph{Case 1.2: for every $0 \leq j \leq \ell$, 
  the configuration reached by the execution $E_r \alpha' A_j$ is univalent$_T^{m+1}$:}
By \emph{\textbf{Case 1}} assumption $E_r \alpha' A_0$ is $1$-valent$_T^{m+1}$, 
  so it must be $1$-univalent$_T^{m+1}$.
On the other hand, $\alpha$ ends with a process (and its clone) returning $0$, 
  so the configuration reached by $E_r \alpha' A_{\ell}$ must be $0$-univalent$_T^{m+1}$.
No intermediate configuration is bivalent, so we can find a $0 \leq j < \ell$ such that 
  $E_r \alpha' A_j$ leads to a $1$-univalent$_T^{m+1}$ configuration and 
  $E_r \alpha' A_{j+1}$ leads to a $0$-univalent$_T^{m+1}$ configuration. 
We take $C_{r+1}$ to be the configuration reached after $E_{r+1} = E_r \alpha' A_j$,
  and define sets $R_c'$, $R_s'$, $V'$ and $L'$ the same way as if $C_{r+1}$ was bivalent in \emph{\textbf{Case 1.1}}.
This still works, but we need a new way to find $P'$ and $Q'$ with desired properties.

Let $o$ be the operation by a process-clone pair in $P$ separating $A_j$ and $A_{j+1}$.
$o$ may not be a read, as no process (or clone) in $T$ can distinguish between configurations 
  after $E_r \alpha' A_j$ or $E_r \alpha' A_j o$,
  making it impossible for these configurations to have different univalencies w.r.t. to $T$.
Let $Q'$ be a set of any $m+1$ pairs in $T$. 
By~\claimref{clm:reserved}, $\lit{Res}(C_{r+1}, Q')$ is non-empty and since $C_{r+1}$ is $1$-univalent$_T^{m+1}$,
  all executions in $\lit{Res}(C_{r+1}, Q')$ return $1$.
Recall that $U'$ is $U$ plus two process-clone pairs that have not taken any steps,
  so that $|U'| = |U| + 2 \leq 5m + 6 + 2(r+1)$. 
Let us use these process-clone pairs to create two duplicates of the process-clone pair performing the write operation $o$.
Both of these new pairs will be in the same state as the original pair performing $o$.
These duplicate processes (and their clones) are thus poised on the same register about to perform 
  write operations $o'$ and $o''$ identical to the operation $o$ at configuration $C_{r+1}$. 

Recall that $|T| \geq 3m+2$ and let $F$ be a set of $m+1$ pairs from $T \setminus Q'$.
Let $P'$ be the union of $F$ and the two new duplicate pairs, $|P'| = m+3$ in total.
Let $O$ be the $0$-univalent$_T^{m+1}$ configuration reached after $E_r \alpha' A_{j+1} = E_r \alpha' A_j o = E_{r+1} o$.
Due to $0$-univalency, there is a reserving execution $\xi \in \lit{Res}(O, F)$ that returns $0$.
Having one duplicate pair perform $o'$ from $C_{r+1}$ while another covers the same register with the same operation $o''$, 
  we reach the state indistinguishable from $O$ for all $m+1$ pairs in $F$.
Thus, execution $o'\xi$ from $C_{r+1}$ returns $0$, and by~\claimref{clm:prefix}, $o'\xi \in \lit{Res}(C_{r+1}, P')$.
By construction $|P'| + |Q'| = 2m+4$ and $P' \cap Q' = \emptyset$, as desired and 
  the intersection of $P'$ and $Q'$ with $V'$ or $L'$ is empty like in \emph{\textbf{Case 1.1}}.

The remaining case is when the configuration reached by the execution $E_r \alpha'$ is $0$-univalent$_T^{m+1}$.
It is a bit more involved but utilizes the same general ideas and techniques.
One difference is that we also split pairs.
The construction is given in~\appendixref{app:case2}.
\end{proof}

\section{Extensions}
\paragraph{Adaptive Lower Bound:}
Let us sketch a proof for an adaptive linear lower bound on the space complexity of consensus for non-anonymous processes
  but under extra restrictions on register size and solo termination.
In this setting, processes are no longer anonymous, but we assume they come from a very large namespace.
Each of these huge number of processes executes its own code,
  however, we get to choose which subset of processes participates in the execution.
We show that there is a linear space lower bound that depends on the number of participating processes, 
  that is, for large enough namespace, we can find an execution of $n$ processes (out of all processes) 
  where $\Omega(n)$ registers get written.
	
The restrictions are that the registers have a bounded size and 
  that the consensus algorithm satisfies bounded nondeterministic solo termination property,
  meaning that there always is a terminating solo execution of a process consisting of less than certain number of steps.
If we had bounded nondeterministic solo termination, the lower bound execution for anonymous processes 
  constructed in~\theoremref{thm:linclone} would always contain less than $B$ steps,
  where $B$ is a finite bound that only depends on $n$ and the solo termination bound.
As registers have a bounded size, for both input values, 
  a process can exhibit only finitely many different behaviors during its first $B$ steps, 
  because in each step it can either read or write a fixed number of different values.
For a sufficiently large namespace (depending on $B$, $n$ and register size), 
  by pigeon-hole principle, we can find $n$ processes such that half of them start with input $1$, half start with $0$
  and all processes with the same input behave as anonymous for the first $B$ steps of an execution.
Hence, we can use~\theoremref{thm:linclone} and get an execution where $n/20$ registers are written to,
  as described at the end of~\sectionref{sec:linbound}.
	
\paragraph{Future Work:}
We believe that is should be possible to derive the above adaptive lower bound without the bounded solo termination assumption,
  and to get good estimate on the required size of the namespace.
However, the major open problem is still to resolve the general, non-anonymous and non-adaptive case,
  i.e. to get tight bounds on the space required to solve consensus with exactly $n$ asymmetric processes.

\section{Acknowledgments}
Support is gratefully acknowledged from the National Science Foundation under grants CCF-1217921, 
        CCF-1301926, and IIS-1447786, the Department of Energy under grant ER26116/DE-SC0008923, 
        and the Oracle and Intel corporations.

The author would like to thank Nir Shavit, Michael Coulombe and Dan Alistarh for helpful conversations and feedback,
  and the anonymous reviewers for their excellent comments.

\bibliographystyle{alpha}
\bibliography{biblio}

\appendix
\newpage
\section{Last case of the proof of~\claimref{clm:casean}}
\label{app:case2}
The notation here is the same as in~\theoremref{thm:linclone}.
In particular, we use the definitions of configuration $D$ and execution intervals $\gamma_c$ and $\gamma_s(\alpha')$.

\begin{figure}
\label{fig:case2}
\caption{Case 2}
\resizebox{\textwidth}{!}{\includegraphics{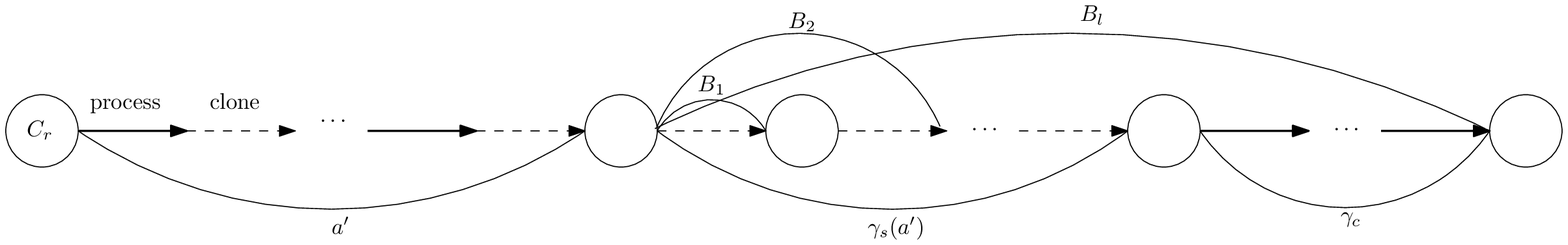}}
\end{figure}

\paragraph{Case 2: the configuration reached by the execution $E_r \alpha'$ is $0$-univalent$_T^{m+1}$:}
Let $D'$ be the configuration reached after executing $E_r \alpha' \gamma_s(\alpha') \gamma_c$.
Recall that in $\gamma_s(\alpha')$, the clones of split pairs overwrite 
  all registers in $R_s \cap W(\alpha')$ with the values these registers had in $C_r$.
Configuration $D'$ is indistinguishable from the $1$-valent$_T^{m+1}$ configuration 
  $D$ reached by $E_r \gamma_c$ for all processes and clones in $T$.
This is because the processes (and clones) in $T$ have not taken steps and 
  the contents of all registers are the same in $D$ and $D'$ since $\alpha'$ contains writes only to registers in $R$.

Let us denote by $l$ the length of execution interval $\gamma_s(\alpha') \gamma_c$ and 
  let $B_j$ be the prefix of this execution interval of size $0 \leq j \leq l$.
Unlike the definition of $A$, when the difference between $A_j$ and $A_{j+1}$ was a ``pair step,'' 
  i.e. two identical operations of the process and its respective clone,
    the difference between $B_j$ and $B_{j+1}$ is actually a single step, i.e. exactly one operation
    by either a process or a clone of some process.
This is because by definition each of the steps in $\gamma_s(\alpha')$ is performed only by a clone 
  (uniting a previously split pair after each operation), 
  while each step in $\gamma_c$ is performed only by a processes 
  (creating a new split pair after each operation).
This is illustrated in~\figureref{fig:case2}.

\paragraph{Case 2.1: for some $0 \leq j \leq l$, 
  the configuration reached by the execution $E_r \alpha' B_j$ is bivalent$_T^{m+1}$:}
We let $C_{r+1}$ be the configuration reached by $E_{r+1} = E_r \alpha' B_j$, 
  and $R' = R \cup \{\lit{reg}\}$, with $|R'| = r+1$.
Since $C_{r+1}$ is bivalent$_T^{m+1}$ there are two subsets of process-clone pairs in $T$ 
  and respective reserving execution intervals from $C_{r+1}$ that decide different outputs.
By~\claimref{clm:disjoint}, we can find $P' \subseteq T$ and $Q' \subseteq T$, 
  with $P' \cap Q' = \emptyset$ and $|P'| = |Q'| = m+1$, such that
  an execution in $\lit{Res}(C_{r+1}, P')$ returns $0$ and an execution in $\lit{Res}(C_{r+1}, Q')$ returns $1$.
As in \emph{\textbf{Case 1.1}} these sets $P'$ and $Q'$ are the new sets of pairs for $C_{r+1}$.

The new set $R_s'$ is $(R_s \setminus (R_s \cap W(\alpha'))) \cup (R_c \cap W(B_j))$, consisting of 
  registers from $R_s$ not touched during $\alpha'$ and 
  registers from $R_c$ written to during the prefix of block write $\gamma_c$ that was executed in $B_j$.
For each register in $R_s \setminus (R_s \cap W(\alpha'))$ 
  we still have the same one pair from $V$ split on it as in $C_r$.
This split pair was fresh in $C_r$ and since its register in $R_s$ has not been written during the extension $\alpha' B_j$
  (not written in $\alpha'$, and hence also not in $\gamma_s(\alpha')$ or $\gamma_c$),
  it is still fresh as required in $C_{r+1}$.
New fresh split pairs are created during the execution of the prefix of $\gamma_c$ in $B_j$,
  as only processes but not their clones take steps and after each write in $\gamma_c$ we get a new fresh split pair.
These pairs are split on registers in $R_c \cap W(B_j)$,
  and we associate exactly one new fresh split pair to each of these registers.  
No other split pairs are fresh, since fresh pairs that were split on $R_s \cap \alpha'$ cannot be fresh in $C_{r+1}$, 
  as their registers were written to during $\alpha'$.

The set $R_c'$ is 
  $(R \cup \{\lit{reg}\}) \setminus R_s' = (R_c \setminus (R_c \cap W(B_j))) \cup (R_s \cap W(\alpha')) \cup \{\lit{reg}\}$.
As in \emph{\textbf{Case 1.1}}, $\alpha'$ is a prefix of a reserving execution interval $\alpha \in \lit{Res}(C_r, P)$,
  so for each register in $R_s \cap W(\alpha')$ we can find a unique covering process (and thus respective pair) 
  from $P$ in $C_{r+1}$.
The register $\lit{reg}$ is covered by the process-clone pair in $P$ with a pending write $w_p$.
For each register in $R_c \setminus (R_c \cap W(B_j))$, 
  we take the pair from $V$ that was associated and covering it in $C_r$.
Neither process nor clone in this pair have taken any steps in $\alpha' B_j$ 
  and are still covering the same register in $C_{r+1}$.

The set $V'$ contains all $r+1$ pairs associated with registers in $R'$, and $V' \subseteq V \cup P$.
The number of split stale pairs has again increased from $L$ by at most $|R_s \cap W(\alpha')|$ 
  due to previously fresh pairs split on $R_s \cap W(\alpha')$ that may now be stale.
Also, as in \emph{\textbf{Case 1.1}}, $(P' \cup Q') \cap (V' \cup L') = \emptyset$ as required.
	
\paragraph{Case 2.2: for every $0 \leq j \leq l$, 
  the configuration reached by the execution $E_r \alpha' B_j$ is univalent$_T^{m+1}$:}
This case is similar to \emph{\textbf{Case 1.2}}.
The configuration reached after $E_r \alpha' B_0$ is $0$-univalent$_T^{m+1}$ by the \emph{\textbf{Case 2}} assumption and 
  configuration $D'$ reached after $E_r \alpha' B_l$ must be $1$-univalent$_T^{m+1}$
  as it is indistinguishable from $1$-valent$_T^{m+1}$ configuration $D$ for all processes (and clones) in $T$.
Therefore, we can find a $0 \leq j < l$ such that 
  $E_r \alpha' B_j$ leads to a $0$-univalent$_T^{m+1}$ configuration and 
  $E_r \alpha' B_{j+1}$ leads to a $1$-univalent$_T^{m+1}$ configuration. 
We let $C_{r+1}$ be the configuration reached after $E_{r+1} = E_r \alpha' B_j$,
  and define $R_c'$, $R_s'$, $V'$ and $L'$ as if $C_{r+1}$ was bivalent in \emph{\textbf{Case 2.1}}.
This again works, and so does the claim about stale split pairs in $C_{r+1}$,
  but we have to construct $P'$ and $Q'$ with the desired properties.

However, we can construct $P'$ and $Q'$ in a very similar way to \emph{\textbf{Case 1.2}}.
If $o$ is the operation separating $B_j$ and $B_{j+1}$, $o$ may not be a read as before,
  and we again create two new duplicate process-clone pairs, both about to perform identical write operations $o'$ and $o''$.
We let $P'$ be a set of any $m+1$ pairs and 
  $Q'$ be a set of $m+3$ pairs with $m+1$ pairs from $T \setminus P'$ and two new duplicate pairs.
Then, by the same argument of \emph{\textbf{Case 1.2}}, $P'$ and $Q'$ satisfy all required properties.

\end{document}